\PassOptionsToPackage{unicode}{hyperref}
\PassOptionsToPackage{hyphens}{url}
\PassOptionsToPackage{dvipsnames,svgnames,x11names}{xcolor}
\documentclass[
]{article}
\usepackage{xcolor}
\usepackage[top=10pc,bottom=10pc,left=11pc,right=11pc,heightrounded,top=1in,bottom=1in,left=1in,right=1in]{geometry}
\usepackage{amsmath,amssymb}
\usepackage{iftex}
\ifPDFTeX
  \usepackage[T1]{fontenc}
  \usepackage[utf8]{inputenc}
  \usepackage{textcomp} % provide euro and other symbols
\else % if luatex or xetex
  \usepackage{unicode-math} % this also loads fontspec
  \defaultfontfeatures{Scale=MatchLowercase}
  \defaultfontfeatures[\rmfamily]{Ligatures=TeX,Scale=1}
\fi
\usepackage{lmodern}
\ifPDFTeX\else
\fi
\IfFileExists{upquote.sty}{\usepackage{upquote}}{}
\IfFileExists{microtype.sty}{% use microtype if available
  \usepackage[]{microtype}
  \UseMicrotypeSet[protrusion]{basicmath} % disable protrusion for tt fonts
}{}
\usepackage{setspace}

\usepackage{longtable,booktabs,array}
\usepackage{calc} % for calculating minipage widths
\usepackage{etoolbox}
\makeatletter
\patchcmd\longtable{\par}{\if@noskipsec\mbox{}\fi\par}{}{}
\makeatother
\IfFileExists{footnotehyper.sty}{\usepackage{footnotehyper}}{\usepackage{footnote}}
\makesavenoteenv{longtable}
\usepackage{graphicx}
\makeatletter
\newsavebox\pandoc@box
\newcommand*\pandocbounded[1]{% scales image to fit in text height/width
  \sbox\pandoc@box{#1}%
  \Gscale@div\@tempa{\textheight}{\dimexpr\ht\pandoc@box+\dp\pandoc@box\relax}%
  \Gscale@div\@tempb{\linewidth}{\wd\pandoc@box}%
  \ifdim\@tempb\p@<\@tempa\p@\let\@tempa\@tempb\fi% select the smaller of both
  \ifdim\@tempa\p@<\p@\scalebox{\@tempa}{\usebox\pandoc@box}%
  \else\usebox{\pandoc@box}%
  \fi%
}
\def\fps@figure{htbp}
\makeatother

\NewDocumentCommand\citeproctext{}{}
\NewDocumentCommand\citeproc{mm}{%
  \begingroup\def\citeproctext{#2}\cite{#1}\endgroup}
\makeatletter
 \let\@cite@ofmt\@firstofone
 \def\@biblabel#1{}
 \def\@cite#1#2{{#1\if@tempswa , #2\fi}}
\makeatother
\newlength{\cslhangindent}
\newlength{\csllabelwidth}
\newenvironment{CSLReferences}[2] % #1 hanging-indent, #2 entry-spacing
 {\begin{list}{}{%
  \setlength{\itemindent}{0pt}
  \setlength{\leftmargin}{0pt}
  \setlength{\parsep}{0pt}
  \ifodd #1
   \setlength{\leftmargin}{\cslhangindent}
   \setlength{\itemindent}{-1\cslhangindent}
  \fi
  \setlength{\itemsep}{#2\baselineskip}}}
 {\end{list}}
\usepackage{calc}

\providecommand{\tightlist}{%
  \setlength{\itemsep}{0pt}\setlength{\parskip}{0pt}}

\usepackage[all,defaultlines=2]{nowidow}

\usepackage{enumitem}
\setlist[1]{labelindent=\parindent}
\setlist[itemize]{leftmargin=*}
\setlist[enumerate]{leftmargin=*}

\setlist[description]{style=unboxed}

\usepackage[titles]{tocloft}

\AtBeginEnvironment{longtable}{\setlist[itemize]{nosep, wide=0pt, leftmargin=*, before=\vspace*{-\baselineskip}, after=\vspace*{-\baselineskip}}}

\usepackage{setspace}

\usepackage[overload]{textcase}
\usepackage[runin]{abstract}

\abslabeldelim{\quad}
\newenvironment{keywords}
{\vskip -3em \hspace{\parindent}\small\sffamily{\sffamily\footnotesize\bfseries\MakeUppercase{Keywords}}\quad}
{\vskip 3em}

\usepackage{titling}
\pretitle{\par\vskip 5em \begin{flushleft}\LARGE\sffamily\bfseries}
\posttitle{\par\end{flushleft}\vskip 0.75em}

\renewcommand{\and}{\end{tabular} \hskip 3em \begin{tabular}[t]{@{\hspace{0em}}l@{}}}
\preauthor{\begin{flushleft}
           \lineskip 1.5em 
           \begin{tabular}[t]{@{\hspace{0em}}l@{}}}
\postauthor{\end{tabular}\par\end{flushleft}}

\predate{}
\postdate{}

\newcommand{\published}[1]{%
   \gdef\puB{#1}}
   \newcommand{\puB}{}

\usepackage{titlesec}
\titleformat*{\section}{\Large\sffamily\bfseries\raggedright}
\titleformat*{\subsection}{\large\sffamily\bfseries\raggedright}
\titleformat*{\subsubsection}{\normalsize\sffamily\bfseries\raggedright}
\titleformat*{\paragraph}{\small\sffamily\bfseries\raggedright}

\titlespacing*{\section}{0em}{2em}{0.1em}
\titlespacing*{\subsection}{0em}{1.25em}{0.1em}
\titlespacing*{\subsubsection}{0em}{0.75em}{0em}

\usepackage[bottom, flushmargin]{footmisc}

\usepackage[font={small,sf}, labelfont={small,sf,bf}]{caption}

\usepackage{pdflscape}
\newcommand{\blandscape}{\begin{landscape}}
\newcommand{\elandscape}{\end{landscape}}

\usepackage{bbm}
\let\origmathbb\mathbb
\renewcommand{\mathbb}[1]{\ifnum\pdfstrcmp{#1}{1}=0 \mathbbm{1}\else\origmathbb{#1}\fi}
\usepackage{xcolor}
\definecolor{coloraccent}{HTML}{107895}
\usepackage[section]{placeins}
\usepackage{float}
\usepackage{tabularray}
\usepackage[normalem]{ulem}
\usepackage{graphicx}
\usepackage{rotating}
\UseTblrLibrary{siunitx}
\NewTableCommand{\tinytableDefineColor}[3]{\definecolor{#1}{#2}{#3}}

\makeatletter
\@ifpackageloaded{caption}{}{\usepackage{caption}}
\AtBeginDocument{%
\ifdefined\contentsname
  \renewcommand*\contentsname{Table of contents}
\else
  \newcommand\contentsname{Table of contents}
\fi
\ifdefined\listfigurename
  \renewcommand*\listfigurename{List of Figures}
\else
  \newcommand\listfigurename{List of Figures}
\fi
\ifdefined\listtablename
  \renewcommand*\listtablename{List of Tables}
\else
  \newcommand\listtablename{List of Tables}
\fi
\ifdefined\figurename
  \renewcommand*\figurename{Figure}
\else
  \newcommand\figurename{Figure}
\fi
\ifdefined\tablename
  \renewcommand*\tablename{Table}
\else
  \newcommand\tablename{Table}
\fi
}
\@ifpackageloaded{float}{}{\usepackage{float}}
\floatstyle{ruled}
\@ifundefined{c@chapter}{\newfloat{codelisting}{h}{lop}}{\newfloat{codelisting}{h}{lop}[chapter]}
\floatname{codelisting}{Listing}

\usepackage{amsthm}
\theoremstyle{plain}
\newtheorem{proposition}{Proposition}[section]
\theoremstyle{definition}
\newtheorem{definition}{Definition}[section]
\theoremstyle{remark}
\AtBeginDocument{}

\makeatother
\makeatletter
\@ifpackageloaded{caption}{}{\usepackage{caption}}
\@ifpackageloaded{subcaption}{}{\usepackage{subcaption}}
\makeatother
\usepackage{bookmark}
\IfFileExists{xurl.sty}{\usepackage{xurl}}{} % add URL line breaks if available
\hypersetup{
  pdftitle={A Quantitative Model of Non-Marriage and Fertility},
  pdfauthor={Kazuharu Yanagimoto},
  pdfkeywords={Marriage, Fertility, Bargaining, Leisure Technology},
  colorlinks=true,
  linkcolor={coloraccent},
  filecolor={Maroon},
  citecolor={coloraccent},
  urlcolor={coloraccent},
  pdfcreator={LaTeX via pandoc}}

\usepackage{orcidlink}  % Create automatic ORCID icons/links

\title{A Quantitative Model of Non-Marriage and Fertility\thanks{I am
grateful to my advisor Nezih Guner for his continued guidance and
support. I thank Pedro Mira, Tom Zohar, and seminar participants at
CEMFI Macro Reading Group, for their useful suggestions. I acknowledge
financial support from the Maria de Maeztu Unit of Excellence CEMFI
MDM-2016-0684, funded by MCIN/AEI/10.13039/501100011033 and CEMFI, and
from JSPS KAKENHI Grant Number 25K23111.}}

\usepackage{etoolbox}
\makeatletter
\providecommand{\subtitle}[1]{% add subtitle to \maketitle
  \apptocmd{\@title}{\par {\vskip 0.25em \large #1 \par}}{}{}
}
\makeatother
\subtitle{Bargaining over Leisure}

\author{
{\large Kazuharu Yanagimoto~\orcidlink{0009-0007-1967-8304}}%
 \\%
Kobe University \\%
{\footnotesize \url{yanagimoto@econ.kobe-u.ac.jp}} \and
}

\date{}

\usepackage{url}
\begin{document}
% ---------------
% TITLE SECTION
% ---------------
\published{\textbf{Monday, March 16, 2026}}

\maketitle

\begin{abstract}
This paper introduces a new factor contributing to the decline in
marriage and fertility: the growth of leisure technology. Over recent
decades, high-income countries have experienced two notable shifts in
household and family dynamics. First, there has been a significant
decline in marriage rates and fertility. Second, time has increasingly
been allocated to leisure activities. This paper presents a unified
model of marriage and fertility, incorporating intra-household
bargaining dynamics. The model, calibrated using data from Japan between
2019 and 2023, is employed to assess the impact of leisure technology
growth on marriage and fertility during 2005-2009. The findings
highlight that leisure technology growth makes single life relatively
more appealing compared to marriage and parenthood. The model explains
21.1\% of the decline in marriage and 73.1\% of the decrease in
fertility.
\end{abstract}
\vskip 3em

\begin{keywords}
\def\sep{;\ }
Marriage\sep Fertility\sep Bargaining\sep 
Leisure Technology
\end{keywords}
% -------------------
% END TITLE SECTION
% -------------------

\setstretch{1.25}
\vskip -3em \hspace{\parindent}{\sffamily\footnotesize\bfseries\MakeUppercase{JEL Codes}}\quad {\sffamily\small J12; J13; D13}\vskip 3em

\newpage

\section{Introduction}\label{sec-intro}

Most of the developed countries have been facing a decline in marriage
(or partnerships) and fertility rates. The decline in fertility is often
viewed as a major policy concern, and many governments implement
family-friendly policies to encourage childbearing. A substantial body
of labor and macroeconomics literature investigates the mechanism behind
fertility decline and the design of optimal policies. However, the
connection between household formation and fertility, the factors
driving the decline in marriage rates, and their potential influence on
fertility decisions are not fully understood.

This paper proposes a new driver of the decline in marriage and
fertility: the improvements in leisure technology. Leisure technology,
such as video games and social media, has been dramatically improved in
the last decade, and the value of leisure has increased. One potential
impact of the growth of leisure technology could be the increased value
of being single. Singles can enjoy their hobbies and leisure time
without any constraints. At the same time, married couples have to
compromise their leisure time with their partners and children, lowering
incentives for marriage and partnerships.

In this paper, I study the impact of leisure technology growth on the
marriage and fertility decline in Japan. Several factors make Japan an
ideal case for studying low fertility. First, marriage and childbirth
are still strongly associated in Japan. According to OECD data, the
share of births outside of marriage in Japan was 2.4\% in 2020, the
lowest among the OECD countries. Hence, to understand low fertility in
Japan, it is crucial to study the entry into marriage and how married
couples choose the number of children. Second, the fertility rate in
Japan has been declining since 2000, and the share of those who have
never married has steadily increased over the past decades. Since
childbirth outside of marriage is very uncommon in Japan, the increase
in the share of never-married is likely to be a significant contributor
to the decline in fertility.

To explore the impact of leisure technology on the decline in marriage
and fertility in Japan, I begin by documenting key patterns, including
the trends in marriage and fertility rates, the reasons for remaining
single, and the time allocation differences between singles and married
couples. First, survey data reveals that 22.4\% of men and 24.5\% of
women choose to remain single to pursue hobbies, while 26.6\% of men and
31\% of women prefer the freedom of being single. This indicates that
the desire and opportunity to enjoy more leisure time as a single person
can discourage marriage, as singles believe their leisure time will
decrease after marriage. I refer to this as the ``marriage penalty on
leisure.'' Second, panel data analysis shows that wives experience a
significantly greater decline in leisure time than their husbands, which
I term the ``child penalty on leisure.'' Finally, I document that the
distribution of leisure time between married couples is influenced by
their relative wages, with the lower-wage partner having less leisure
time. This finding aligns with household bargaining, where the
higher-earning partner enjoys more leisure time, contributing to the
marriage penalty on leisure.

To quantify the impact of leisure technology on the decline in marriage
and fertility, I develop a model of time allocation and household
formation. This model includes single and married individuals, each with
a stochastic life cycle. Individuals differ in their labor market
productivity. In each period, singles are randomly matched with other
singles, and marriage occurs if both parties agree. Married individuals
must decide how to allocate their time among work, household work, and
leisure, as well as when and whether to have children. While children
bring utility, they also increase the household workload for parents.
Household work is modeled as a time requirement that must be met by
aggregating time inputs from husbands and wives. The parameters of this
aggregation are allowed to change over time to reflect changes in the
social norms, which become more egalitarian.

Marriage has economic value due to resource pooling and a random utility
value that reflects the match quality between partners. Although married
couples have more resources, their time allocation is more constrained.
A couple jointly decides how to allocate time and how many children to
have, with bargaining power determined by their relative wages.
Individuals with lower wages have less leisure time and experience a
more significant reduction in leisure time when they have children.
Households get utility from consumption, children, and leisure. The
utility weight of leisure is allowed to change over time to capture
changes in the leisure technology. When deciding whether to marry,
individuals consider the value of remaining single, including potential
opportunities to meet other partners. The distribution of potential
partners is endogenous and is determined by the decisions of all
individuals.

I estimate the parameters of the structural model using data for the
2019-2023 period. The model generates the observed pattern of household
time allocation, marriage, and fertility rates. It also captures the
heterogeneous marriage rate by earnings and the child penalty on
leisure, which are not targeted in the calibration.

I then apply the model to examine the factors driving the decline in
marriage and fertility in Japan over recent decades, with a particular
focus on the impact of leisure technology growth. In addition to leisure
technology, I also consider the effects of rising female wages and
changes in gender roles within households. These two factors have also
undergone significant shifts in the past decade and have been
extensively studied in the literature regarding their influence on
marriage and fertility rates.

The model successfully accounts for a substantial portion of the
observed decline in marriage and fertility over the last decade,
explaining 21.1\% of the decline in marriage and 73.1\% of the decrease
in fertility. Decomposition analysis reveals that the growth of leisure
technology has been the most significant factor contributing to the
decline in fertility, as it increases the relative value of leisure
compared to having children.

\textbf{Related Literature} My first contribution is to introduce the
leisure technology growth as a new driver of the marriage and fertility
decline The impact of leisure technology growth on the labor supply and
the value of leisure has been studied by others. Kopecky
(\citeproc{ref-kopecky2011}{2011}) builds a model of endogenous
retirement with leisure production and shows that the decline in the
price of leisure goods makes retirement more attractive. Aguiar et al.
(\citeproc{ref-aguiar2021}{2021}) show that the leisure technology
growth in computer games can explain the 2\% decline in the labor supply
of young men in the US in the last decade. Kopytov et al.
(\citeproc{ref-kopytov2023}{2023}) also show the decline in prices of
recreational activities can explain a large proportion of the decrease
in the labor working hours in some OECD countries.\footnote{ For the
  United States, González-Chapela
  (\citeproc{ref-gonzalez-chapela2007}{2007}) studied the negative
  impact of the decline in the price of recreational goods on hours in
  the labor market from 1976 to 93 and Vandenbroucke
  (\citeproc{ref-vandenbroucke2009}{2009}) from 1900 to the 1950s.}
However, little is known about the impact of leisure technology growth
on the marriage and fertility decline. By incorporating the growth of
leisure technology into the model, I show that leisure technology growth
can explain a significant proportion of the decline in marriage and
fertility in the last decade.

This paper also contributes to the labor and macroeconomics literature
that studies changes in marriage and fertility in high-income
countries.\footnote{ See Greenwood et al.
  (\citeproc{ref-greenwood2023}{2023}) and Doepke et al.
  (\citeproc{ref-doepke2023}{2023}) for comprehensive reviews of this
  literature.} As an early work, Ahn and Mira
(\citeproc{ref-ahn2002}{2002}) points out the negative correlation
between fertility and female labor force participation in high-income
countries. Female earnings and labor market arrangements that affect
female labor supply have been studied as a key factor in the recent
marriage and fertility decline (Santos and Weiss
(\citeproc{ref-santos2016}{2016}); Adda et al.
(\citeproc{ref-adda2017}{2017}); Blasutto
(\citeproc{ref-blasutto2023}{2023}); Guner et al.
(\citeproc{ref-guner2023}{2023}); Cruces
(\citeproc{ref-cruces2024}{2024})). In this literature, Greenwood et al.
(\citeproc{ref-greenwood2016}{2016}) build a dynamic marriage model with
growth in home production technology. While their model explains the
decline in marriage rates in the US by the decrease in the price of home
production inputs, the model does not include an endogenous fertility
decision or leisure choice. Baudin et al.
(\citeproc{ref-baudin2015}{2015}) build a model of endogenous marriage
and fertility. However, their model is static and does not consider a
dynamic household formation.\footnote{ Myong et al.
  (\citeproc{ref-myong2021}{2021}) extend the model of Baudin et al.
  (\citeproc{ref-baudin2015}{2015}) by incorporating the social norms
  and explain the marriage and fertility decline in South Korea.}

Another related literature focuses on the importance of the bargaining
power and intra-household decision-making.\footnote{ Basu
  (\citeproc{ref-basu2006}{2006}) and Iyigun and Walsh
  (\citeproc{ref-iyigun2007}{2007}) provide theoretical models of
  endogenous bargaining power.} Knowles
(\citeproc{ref-knowles2013}{2013}) studies intra-household bargaining
and labor supply and shows that gender asymmetry in bargaining power
explains the time trends of female labor supply in the US from 1970 to
2001. Burda et al. (\citeproc{ref-burda2013}{2013}) document a negative
relationship between GDP per capita and gender differences in total work
and emphasize the role of social norms and intra-household bargaining.
To study the relationship between bargaining power and fertility, Doepke
and Kindermann (\citeproc{ref-doepke2019}{2019}) build a Nash-bargaining
model of fertility. While these models highlight the importance of
relative bargaining power in explaining intra-household decision-making,
the partnership formation with the bargaining after the marriage is
absent. The current model endogenizes the marriage given the expectation
of household decision-making after the marriage.

Finally, this paper deepens the economic understanding of the decline in
marriage and fertility in Japan. Kitao and Nakakuni
(\citeproc{ref-kitao2024b}{2024}) build a static model of marriage and
fertility and show that one of the main drivers of the marriage and
fertility decline in Japan from 1970 to 2020 is the change in home
production technology and the increase in time and financial costs of
childcare. While this paper captures the trends in marriage and
fertility rates in the past decades, it does not consider the impact of
leisure technology growth and gender asymmetries in bargaining power.
Lise and Yamada (\citeproc{ref-lise2019}{2019}) study the collective
model of intra-household allocation and welfare analysis in Japan, and
Guo and Xie (\citeproc{ref-guo2024}{2024}) extend it by incorporating
the arrival of the first child. While these papers focus on
intra-household decision-making, the current paper investigates the
marriage market and family formation, given the expectation of
intra-household bargaining.

The rest of the paper is organized as follows. Section~\ref{sec-facts}
documents the stylized facts of the decline in marriage and fertility.
Section~\ref{sec-model} presents the model of time allocation and
household formation. Section~\ref{sec-calib} describes the calibration
strategy. Section~\ref{sec-baseline} estimates the model parameters
using the census and the household survey. Section~\ref{sec-past} shows
the model's ability to reproduce the observed pattern of the marriage
and fertility rates. Section~\ref{sec-concl} concludes the paper.

\section{Facts}\label{sec-facts}

This section investigates the reasons why people do not get married and
provides stylized facts to help the model construction. The results
point out the importance of leisure time in the marriage decision and
suggest that intra-household bargaining power might disincentivize
people from marriage and childbirth.

\subsection{Marriage and Fertility
Trends}\label{marriage-and-fertility-trends}

The fertility rate in Japan has been declining since 2000. Panel (a) of
Figure~\ref{fig-nmarried-fertility} shows the fertility rate of women at
age 45. The fertility rate started to decline from 2000 and reached 1.47
in 2022. During the same period, the share of those who have never
married has dramatically increased in the past decades in Japan. Panel
(b) of Figure~\ref{fig-nmarried-fertility} shows the share of those who
were never married at age 45-54. For men, the share of those who are
never married has gradually increased from 1980 and reached 25.8\% in
2020. For women, the share started to increase from 2000 when the
fertility rate began to decline and reached 16.4\% in 2020.

\begin{figure}

\centering{

\includegraphics[width=0.9\linewidth,height=\textheight,keepaspectratio]{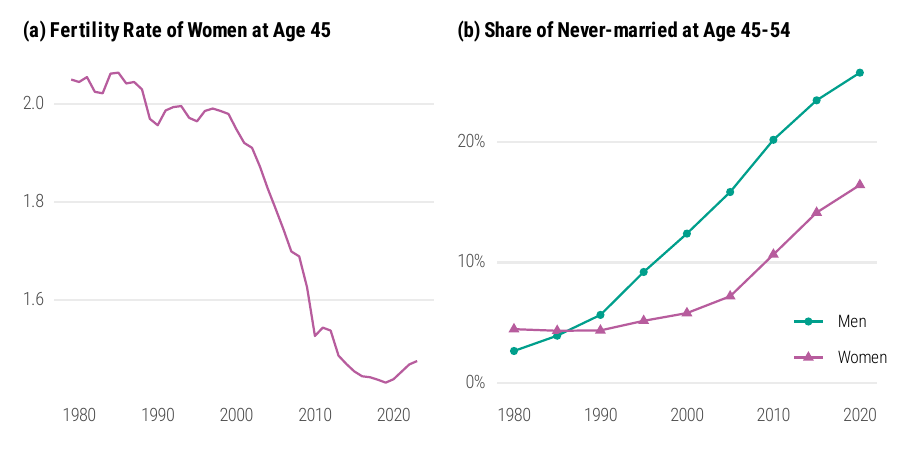}

}

\caption{\label{fig-nmarried-fertility}\textbf{Marriage and Fertility in
the Past Decades in Japan}. Panel (a) shows the fertility rate of women
at age 45. The data is from the Human Fertility Database. Panel (b)
shows the share of those who have never married at age 45-54. The data
is from the Japanese Censuses.}

\end{figure}%

\subsection{Why People Do Not Marry}\label{why-people-do-not-marry}

What is the main reason why people do not get married? The National
Fertility Survey (NFS), which investigates the situation and issues
regarding marriage, childbirth, and child-rearing every five years,
provides information on the main reasons people do not get married.
Figure~\ref{fig-reason-single} shows the top 5 reasons in
2021.\footnote{ I plot the time trends of these reasons from 1992 to
  2021 in Figure~\ref{fig-reason-single-ts}. The share of the top
  reasons does not change much over time. However, given that the share
  of never-married people is increasing and the survey sample is limited
  to singles, the number of people who agree with these reasons might be
  growing.} The sample is restricted to singles aged 25-34.

\begin{figure}

\centering{

\includegraphics[width=0.9\linewidth,height=\textheight,keepaspectratio]{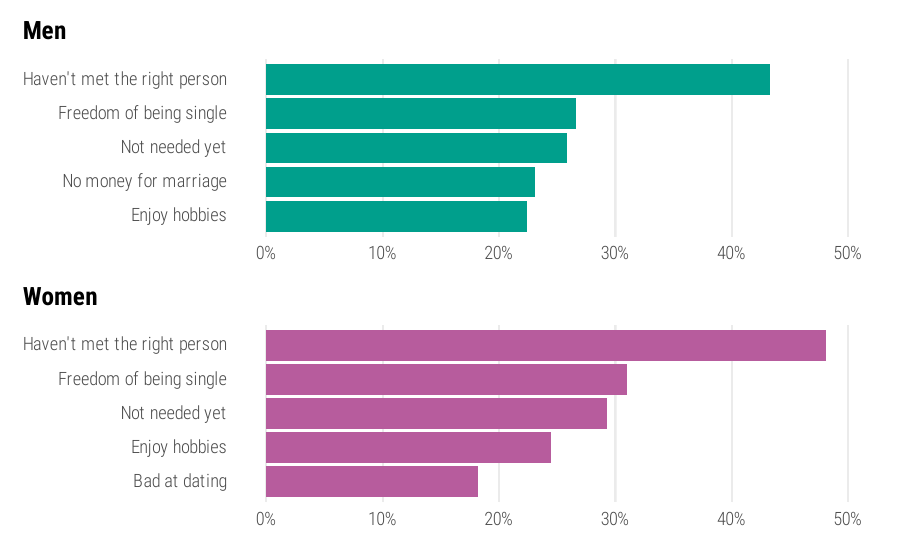}

}

\caption{\label{fig-reason-single}\textbf{Top 5 Reasons Why Do not Get
Married}. The share of the top 5 reasons why do not get married. The
data is from the National Fertility Survey, and the sample is restricted
to the age group 25-34 in 2021.}

\end{figure}%

For both men and women, the top reason is ``I haven't met the right
person'', which implies their difficulty in finding a partner with whom
they spend a better time than being single. The third reason for both,
``Not needed yet'', can also be interpreted in this context. They might
expect their life will not improve with their potential partner. These
elements will be represented in the model as a matching process with a
potential partner and their marriage decision based on their expected
value of being single and married.

``Freedom of being single'' and ``Enjoy hobbies'' are also important
reasons for both men and women. It implies that they expect marriage to
restrict their lives and not allow them to enjoy their hobbies enough.
In other words, once they get married, they cannot live as well as they
did when they were single. In the model, this will be represented as a
couple's joint decision of time allocation and the bargaining power
between them. If they get married, especially with a partner with a
higher bargaining power, they may have less leisure time, and their
utility could be lower than if they remained single.

\subsection{Data and Samples}\label{sec-data-samples}

The primary data source for the empirical analysis is the Japanese
Household Panel Survey (JHPS). JHPS is panel data that started in 2004,
with the most recent data wave from 2023. The original 2004 sample was
nationwide and contained 4000 households and 7000 individuals older than
20. Additional samples were added in 2007 (1400 individuals), 2009 (4000
individuals), and 2012 (1000 individuals). The survey has information on
earnings, working hours, and other labor market outcomes, as well as on
family structure, fertility, and other demographic variables. For
details on the data, see Section~\ref{sec-app-jhps}.

The survey also has information on hours spent on housework and
childcare, and I define the hours spent on leisure as the residuals of
the total time budget. For the following discussion, I will use the
terms ``working hours'' and ``hours worked'' to refer to the sum of
hours spent on the market and commuting. ``Domestic labor'' refers to
the sum of hours spent on housework and childcare, and ``leisure''
refers to the residual. I assume the total time budget is 16 hours per
day, and \(16 \times 7 = 112\) hours per week. Hence, the weekly leisure
hours are calculated as the total time budget (112 hours) minus the sum
of working hours and domestic labor per week.

The sample is restricted to people aged 25-54 in the period 2005-2023.
The data in 2004 is not used because the domestic labor data is not
available. The singles sample is restricted to those with a job, a
positive leisure time, and no children. The sample of married couples is
also restricted to those with a positive leisure time. However, it
includes non-working individuals.

\subsection{Child Penalty on Leisure}\label{child-penalty-on-leisure}

Another important motivation for marriage is having children. On the
other hand, raising a child is time-consuming and reduces parents'
leisure time. Hence, the reasons why people do not get married,
``Freedom of being single'' and ``Enjoy hobbies'', might be related to
childbearing and its implications for time allocations within marriage.

I conduct an event study analysis to investigate the impact of
childbirth on leisure time. The specification is given by

\begin{equation}\protect\phantomsection\label{eq-event-study}{
y_{it} = \alpha_i + \lambda_t + \sum_{q \neq -2, -\infty} \beta_q \mathbb{1}\{C_i + q = t\} + \varepsilon_{it}.
}\end{equation}

The \(y_{it}\) is the time allocation of individual \(i\) at time \(t\),
\(\alpha_i\) is the individual fixed effect, \(\lambda_t\) is the time
fixed effect, and \(\varepsilon_{it}\) is the error term. The \(C_i\) is
the time of the first childbirth of individual \(i\), so the \(k\)
represents the relative years to the first childbirth. The sample
consists of the individuals who had their first childbirth during the
sample period and the individuals who did not have a child in the sample
period. For the individuals who did not have a child, the \(C_i\) is set
to \(\infty\), and the \(k\) is set to \(-\infty\). While this is in
line with ``child penalty'' literature starting from Kleven et al.
(\citeproc{ref-kleven2019a}{2019}), this specification includes the
individual fixed effect to make a unit comparison. The importance of the
individual fixed effect for the child penalty is discussed in
Arkhangelsky et al. (\citeproc{ref-arkhangelsky2026}{2026}).

Figure~\ref{fig-child-penalty} shows the coefficient \(\beta_k\) of the
event study Eq.~\ref{eq-event-study}. For women, childbirth has a huge
negative impact on working hours (-25 hours at \(q = 1\)) and a positive
impact on domestic labor (59 hours). The impact on men is relatively
small (-1 hours for working hours and 8 hours for domestic labor).
Interestingly, the impact on women's leisure is negative and larger (-34
hours at \(q = 1\)) than men's leisure (-7 hours). This implies that a
pure specialization is not held. If the difference in their wages is the
reason why husbands work more in the market and wives work more in the
house, the decline of leisure after childbirth should be similar for men
and women. This child penalty on leisure might discourage women from
having children.\footnote{ Guo and Xie (\citeproc{ref-guo2024}{2024})
  also show the child penalty on leisure in Japan by the specification
  of Kleven et al. (\citeproc{ref-kleven2019a}{2019}).}

\begin{figure}

\centering{

\includegraphics[width=0.9\linewidth,height=\textheight,keepaspectratio]{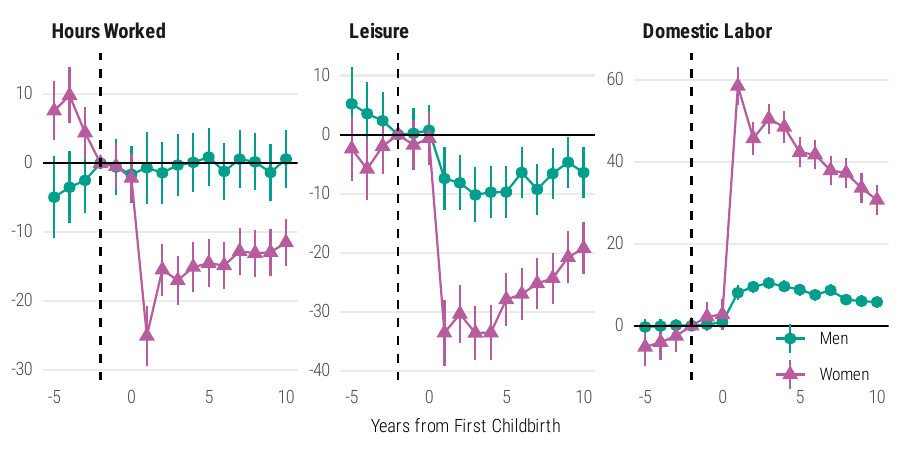}

}

\caption{\label{fig-child-penalty}\textbf{Event Study of Time Allocation
by Childbirth}. The child penalty on leisure for singles and married
couples. The data is from the JHPS, and the sample is restricted to the
age group 25-54 in 2005-2023.}

\end{figure}%

\subsection{Intra-Household Time Allocation and Bargaining
Power}\label{intra-household-time-allocation-and-bargaining-power}

Where does the child penalty on leisure come from? One possible
explanation is the intra-household bargaining. If a husband has a higher
wage than his wife, he might have more bargaining power on household
time allocations. As a result, the wife might have less leisure time
than the husband.

To highlight the role of bargaining, I study how the relative wages of
husbands and wives affect leisure time allocations within households.
Figure~\ref{fig-intra-household} shows the relationship between the log
difference in wages and the leisure time allocation by the existence of
small children (younger than 7 years old). An interesting result is that
the relationship for leisure is positive, i.e., the partner with higher
wages has more leisure time. This implies that the husband has more
bargaining power on leisure and might suggest that the bargaining power
is a mechanism behind the child penalty on leisure. The figure also
shows that the intra-household gaps in leisure time are larger for
couples with small children than for those without small children. This
implies that the wife's bargaining power become weaker when children are
born and is consistent with the child penalty on leisure.

\begin{figure}

\centering{

\includegraphics[width=0.9\linewidth,height=\textheight,keepaspectratio]{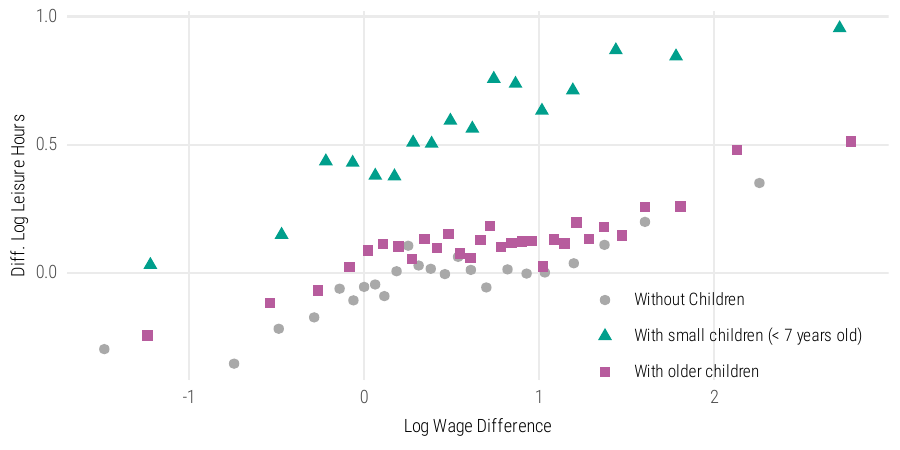}

}

\caption{\label{fig-intra-household}\textbf{Intra-Household Leisure Time
Allocation}. A bin-scatter plot of the intra-household leisure time
allocation for dual-working married couples. The data is from the JHPS,
and the sample is restricted to the age group 25-54 from 2005 to 2023.}

\end{figure}%

\section{Model}\label{sec-model}

\subsection{Setup}\label{setup}

The model economy consists of an equal mass continuum of men (gender
\(m\)) and women (gender \(f\)). Individuals are distinguished by their
discrete level of exogenous hourly wage, denoted
\(w_{g} \in \mathcal{W}\) for gender \(g \in \{m, f\}\). It will be
assumed that \(w_g\) is log-normally distributed so that
\(\log w_g \sim \mathcal{N}(\mu_{w_g}, \sigma_{w_g})\). Some individuals
of each gender will be married, but the rest will never marry. Divorce
is not allowed in this model. Married couples can have up to three
children, while singles are not allowed to have children.

Individuals live forever but stochastically age and die. There are three
states of life: young (\(Y\)), middle-aged (\(M\)), and old (\(O\)). In
each period, a person faces a constant probability \(\kappa\) of
transitioning to the next life stage: from \(Y\) to \(M\), from \(M\) to
\(O\), and from \(O\) to death. Upon death, an individual is replaced by
a new single individual with the age \(Y\) and the wage
\(w_g \in \mathcal{W}\).

Marriage and having a new child are allowed only for the couple with age
\(Y\) or \(M\), and they can only marry someone of the same age. Also,
there are two states of the age of children, \(0\) and \(1\). Children
age from \(0\) to \(1\) when their parents age from \(Y\) to \(M\) or
\(M\) to \(O\).

Each individual is endowed with one unit of time each period. Households
(single or couple) are also required to perform domestic labor based on
their marital status and the age and number of children. This
requirement captures housework and childcare that have to be done by the
household members. The household decides how to allocate time for the
labor market, leisure, and domestic labor. Domestic labor decisions
imply how time requirement is shared between the household members. A
married couple with age \(Y\) or \(M\) also decides to have a new child
or not if they do not have three children yet.

At the end of each period, a single person will meet someone else of the
opposite gender from the set of singles of the same age. The couple will
then draw a match-specific bliss shock \(b \in \mathcal{B}\), taken from
a distribution \(G(b)\), which is assumed to be a normal distribution
\(\mathcal{N}(\mu_b, \sigma_b)\). In a marriage, the bliss shock stays
constant over time, and the couple enjoys the same shock during each
period.

Last, let everyone have a common subjective discount factor \(\beta\).
Combined with the transition probability \(\kappa\), the effective
discount factor is \(\beta(1-\kappa)\) for all ages. Suppose that for
married, tastes over the consumption of market goods \(c\), leisure
hours \(l\), and the number of children \(N\) are represented by

\begin{equation}\protect\phantomsection\label{eq-function-utility}{
u(c, l, N) = \frac{c^{1-\gamma_c}}{1-\gamma_c} + \alpha_l \frac{l^{1-\gamma_l}}{1-\gamma_l} + \alpha_n \frac{(1 + N)^{1-\gamma_n} - 1}{1-\gamma_n}.
}\end{equation}

Note that the singles do not have children in this model, so the number
of children \(N\) is always zero, and their utility function is
simplified as

\[
u(c, l) = \frac{c^{1-\gamma_c}}{1-\gamma_c} + \alpha_l \frac{l^{1-\gamma_l}}{1-\gamma_l}.
\]

\subsection{Singles}\label{singles}

Consider the consumption and time allocation decision facing a single
individual. This is a purely static problem and does not depend on the
age of the single. For a single individual with wage \(w_g\), the
problem is given by

\begin{equation}\protect\phantomsection\label{eq-utility-single}{
v_g(w_g) = \max_{c_g, h_g, l_g, d_g} u(c_g, l_g),
}\end{equation}

subject to

\[
\begin{aligned}
c &= w_g h_g, & \text{(Budget Constraint)} \\
d_g &= \psi_g^S, & \text{(Domestic Labor Constraint)}
\end{aligned}
\]

and

\[
h_g + l_g + d_g = 1. \text{ (Time Constraint)}
\]

The domestic labor requirement for singles is a gender-specific
constant: \(\psi_m^S\) for men and \(\psi_f^S\) for women.

Next, consider the marriage decision facing a single. Suppose a single
individual of wage \(w_g\) with age \(a \in \{Y, M\}\) meets an opposite
gender single of wage \(w_{g'}\), and the potential couple draws a bliss
shock \(b\). They decide to marry based on the expected lifetime utility
of being single and being married. Let \(W_{g}^a(w_{g})\) be the
expected lifetime utilities. Both parties will realize if they remain
single in the current period. Likewise, let \(V_{g}(w_g, w_{g'}, b)\) be
the expected lifetime utility associated with a marriage in the current
period. A marriage will occur if and only if

\begin{equation}\protect\phantomsection\label{eq-marriage}{
V_m^a(w_m, w_f, b) > W_m^a(w_m) \text{ and } V_f^a(w_f, w_m, b) > W_f^a(w_f).
}\end{equation}

Define an indicator function \(\mathbb{1}^a(w_m, w_f, b)\) as taking the
value 1 if the couple marries and 0 otherwise. The value function for a
single individual of wage \(w_g\) with age \(a\) can be written as

\begin{equation}\protect\phantomsection\label{eq-value-single}{
\begin{aligned}
W_{g}^Y(w_{g}) &= v(w_{g}) \\
&+ \beta (1 - \kappa) \int_{\mathcal{B}}\int_{\mathcal{W}}
(1 - \mathbb{1}^Y) W_{g}^Y(w_g) + \mathbb{1}^Y V_{g}^Y(w_{g}, w_{g'}, 0, 0; b) \,d\hat{S}^Y_{g'}(w_{g'}) dG(b) \\
&+ \beta \kappa \int_{\mathcal{B}}\int_{\mathcal{W}} (1 - \mathbb{1}^M) W_{g}^M(w_{g}) + \mathbb{1}^M V_{g}(w_{g}, w_{g'}, 0, 0; b) \,d\hat{S}^M_{g'}(w_{g'}) dG(b) \\
W_{g}^M(w_{g}) &= v(w_{g}) \\
&+ \beta (1 - \kappa) \int_{\mathcal{B}}\int_{\mathcal{W}}
(1 - \mathbb{1}^M) W_{g}^M(w_{g}) + \mathbb{1}^M V_{g}^M(w_{g}, w_{g'}, 0, 0; b) \,d\hat{S}^M_{g'}(w_{g'}) dG(b) \\
&+ \beta \kappa W_{g}^O(w_{g}) \\
W_{g}^O(w_{g}) &= v(w_{g}) + \beta (1-\kappa) W_{g}^O(w_{g})
\end{aligned}
}\end{equation}

A single individual at age \(Y\) enjoys \(v(w_g)\) in this period. Next
period, he transitions to \(M\) with probability \(\kappa\) or remains
at age \(Y\) with probability \(1 - \kappa\). Then they meet another
opposite gender single from the distribution
\(\hat{S}_{g'}^{a}(w_{g'})\) for \(a \in \{Y, M\}\). Since marriage
behavior is different by wage, the wage distribution of the potential
partner is an endogenous object, which will be determined in the
equilibrium and formally defined in Section~\ref{sec-mdl-equilibrium}.

\subsection{Couples}\label{couples}

The consumption and time allocation decisions a couple faces are also
static problems. A couple with age \(a \in \{Y, M, O\}\), wage \(w_m\)
and \(w_f\), and \(N_0\) children of age \(0\) and \(N_1\) children of
age \(1\) solves

\begin{equation}\protect\phantomsection\label{eq-utility-married}{
\max_{c, h_m, h_f, l_m, l_f, d_m, d_f}
(1 - \lambda)u\left(\frac{c}{\Gamma(N_0, N_1)}, l_m, N_0+N_1\right) +
\lambda u\left(\frac{c}{\Gamma(N_0, N_1)}, l_f, N_0+N_1\right)
}\end{equation}

subject to

\[
\begin{aligned}
c &= w_m h_m + w_f h_f, & \text{(Budget Constraint)} \\
D(d_m, d_f) &= \psi_0 + \psi_1 \mathbb{1}\{N_0 > 0\} + \psi_2 \mathbb{1}\{N_0 + N_1 > 0\}, & \text{ (Domestic Labor Constraint)} \\
\end{aligned}
\]

and

\[
\begin{aligned}
h_g + l_g + d_g &= 1 \text{ for } g \in \{m, f\} & \text{(Time Constraint)}. \\
\end{aligned}
\]

The bargaining power of the wife is given by

\begin{equation}\protect\phantomsection\label{eq-bargaining-power}{
\lambda = \Lambda(w_m, w_f, N_0) = \frac{1}{1 + \exp\left(\rho_0 + \rho_1 (\log w_m - \log w_f) + \rho_2 \mathbb{1} \{N_0 > 0\} \right)}.
}\end{equation}

The curvature parameters \(\rho_0, \rho_1, \rho_2\) captures the
strength of the bargaining power by relative wage and existence of the
small children. If \(\rho_0 = \rho_1 = \rho_2 = 0\), the bargaining
power becomes equal (\(\Lambda(w_m, w_f, N_0) = \frac{1}{2}\)) and does
not depend on wage difference. If \(\rho_1 = 1\) and
\(\rho_0 = \rho_2\), the bargaining power is determined by the share of
the wife's wage in the total wage (\emph{a la} Baudin et al.
(\citeproc{ref-baudin2015}{2015}) ). Section~\ref{sec-app-bargaining}
shows the positive correlation between the intra-household wage gaps and
leisure time happens if and only if the bargaining power \(\rho_1\) is
larger than 1. It is also shown that the intra-household leisure time
gap by the existence of the small children \((N_0 > 0)\) if
\(\rho_2 > 0\).

The \(\psi_0\) represents the domestic labor requirement for married
couples without children, and \(\psi_2\) represents the additional
domestic labor requirement. If they have a small child, the domestic
labor requirement increases by \(\psi_1\).\footnote{ In the Appendix,
  Figure~\ref{fig-hours-domestic} shows that domestic labor hours
  significantly increase with the existence of children but do not
  change much with the number of children.} The \(\Gamma(N_0, N_1)\) is
the consumption scale factor that depends on the number of children.
Here I assume the economies of scale, i.e.,
\(\Gamma(N_0, N_1) < 2 + N_0 + N_1\).

The production function of domestic labor is given by a Constant
Elasticity of Substitution (CES) function,

\begin{equation}\protect\phantomsection\label{eq-joint-domestic-labor}{
D(d_m, d_f) = \left((1-\theta) d_m^{\xi} + \theta d_f^{\xi} \right)^{\frac{1}{\xi}}
\quad \text{with } \theta \in (0, 1), \xi < 1.
}\end{equation}

The \(\theta\) parameter represents the relative productivity of the
wife in domestic labor and potentially captures the social norms of
gender roles in domestic labor. This is because the higher \(\theta\)
incentivizes the wife to do more domestic labor and the husband to do
less. I formulate the role of \(\theta\) in Section~\ref{sec-app-theta}.
The \(\xi\) parameter represents the elasticity of substitution between
the husband's and wife's domestic labor. If \(0 < \xi < 1\), the
domestic labor is a substitute, and if \(\xi < 0\), the domestic labor
is a complement.

For ages \(Y\) or \(M\), the couple also decides whether to have a new
child or not. I assume the couple can have up to three children. The
couple decides to have a new child with the same bargaining power
\(\lambda = \Lambda(w_m, w_f, N_0)\) as the time allocation decision.
Hence, the young couple with \(N_0 \le 2\) solves

\begin{equation}\protect\phantomsection\label{eq-childbirth-young}{
\begin{aligned}
\max_{N'_0 \in \{N_0, N_0 + 1\}}
& (1-\kappa) \left[
  (1 - \lambda) V_m^Y(w_m, w_f, N'_0, 0; b) +
  \lambda V_f^Y(w_f, w_m, N'_0, 0; b) \right]\\
&+ \kappa \left[
  (1 - \lambda) V_m^M(w_m, w_f, 0, N'_0; b) +
  \lambda V_f^M(w_f, w_m, 0, N'_0; b) \right]
\end{aligned}
}\end{equation}

and the middle-aged couple with \(N_0 + N_1 \le 2\) solves

\begin{equation}\protect\phantomsection\label{eq-childbirth-middle}{
\begin{aligned}
\max_{N'_0 \in \{N_0, N_0 + 1\}}
& (1-\kappa) \left[
  (1 - \lambda) V_m^M(w_m, w_f, N'_0, N_1; b) +
  \lambda V_f^M(w_f, w_m, N'_0, N_1; b) \right]\\
&+ \kappa \left[
  (1 - \lambda) V_m^O(w_m, w_f, 0, N'_0 + N_1; b) +
  \lambda V_f^O(w_f, w_m, 0, N'_0 + N_1; b) \right].
\end{aligned}
}\end{equation}

The first term in Eq.~\ref{eq-childbirth-young} captures the couple's
expected lifetime utility if they do not age to \(M\). If they don't
have three children yet, they can choose to have a new child or not, and
the number of children at age \(0\) will be \(N'_0\). The second term
represents the expected lifetime utility of the couple if they age to
\(M\). When they age to \(M\), their children will age from \(0\) to
\(1\), and as a result, the number of children at age \(0\) will be
zero, and the number of children at age \(1\) will be \(N'_0\). Note
that this stochastic aging of children also applies to their newborn
child, i.e., the newborn child will be aged to \(1\) in the next
period.\footnote{ This assumption is made for computational reasons, to
  reduce the dimensions of the state space.} The middle-aged couple's
decision Eq.~\ref{eq-childbirth-middle} is similar to the young couple's
decision.

Given married couple's decision of childbirth, the newborn child will be
born with a probability \(\delta_1\) when the parents aged \(Y\) and
with a probability \(\delta_2\) when the parents aged \(M\). These
probabilities capture heterogeniety and uncertainty in the childbirth
decision as well as the difference in fecundity by age. The probability
of not having a child even if they want could be increased with age.

Let the indirect utility functions derived from the couple's time
allocation problem be \(v_g(w_g, w_{g'}, N_0, N_1)\) for
\(g \in \{m, f\}\), which does not depend on the age of the couple.
Given the childbirth decision \(N_0^*\), the value function for a
married man of wage \(w_m\) and with \(N_0\) children age \(0\) and
\(N_1\) children age \(1\) and a bliss shock \(b\) can be written as

\begin{equation}\protect\phantomsection\label{eq-value-married}{
\begin{aligned}
&V_{g}^Y(w_{g}, w_{g'}, N_0, 0; b) = v_{g}(w_{g}, w_{g'}, N_0, 0) + b \\
&\quad+ \beta (1 - \kappa) \delta_1 V_{g}^Y(w_{g}, w_{g'}, N_0^*, 0; b)
+ \beta (1 - \kappa) (1-\delta_1)V_{g}^Y(w_{g}, w_{g'}, N_0, 0; b) \\
&\quad+ \beta \kappa \delta_1 V_{g}^M(w_{g}, w_{g'}, 0, N_0^*; b)
+ \beta \kappa (1-\delta_1)V_{g}^M(w_{g}, w_{g'}, 0, N_0; b) \\
&V_{g}^M(w_{g}, w_{g'}, N_0, N_1; b) = v_{g}(w_{g}, w_{g'}, N_0, N_1) + b \\
&\quad+ \beta (1 - \kappa) \delta_2 V_{g}^M(w_{g}, w_{g'}, N_0^*, N_1; b)
+ \beta (1 - \kappa) (1-\delta_2)V_{g}^M(w_{g}, w_{g'}, N_0, N_1; b) \\
&\quad+ \beta \kappa \delta_2 V_{g}^O(w_{g}, w_{g'}, 0, N_0^* + N_1; b)
+ \beta \kappa (1-\delta_2) V_{g}^O(w_{g}, w_{g'}, 0, N_0 + N_1; b) \\
&V_{g}^O(w_{g}, w_{g'}, 0, N_1; b) = v_{g}(w_{g}, w_{g'}, 0, N_1) + b
+ \beta (1-\kappa) V_{g}^O(w_{g}, w_{g'}, 0, N_1; b).
\end{aligned}
}\end{equation}

\subsection{Equilibrium}\label{sec-mdl-equilibrium}

The dynamic programming problem for a single person, or equation
Eq.~\ref{eq-value-single}, depends on the problem's solution for a
married person, as given by equation Eq.~\ref{eq-value-married}. In
addition, solving the single's problem requires knowing the steady-state
wage distribution of potential mates (opposite gender \(g'\)) in the
marriage market \(S_{g'}^{a}\) for \(a \in \{Y, M, O\}\). The
non-normalized steady-state wage distributions for singles are given by

\begin{equation}\protect\phantomsection\label{eq-dist-single}{
\begin{aligned}
S_{g}^Y(w_{g}) &= (1 - \kappa)\int_{\mathcal{B}}\int_{\mathcal{W}_{g'}}
\int_{\mathcal{W}_{g}}^{w_{g}}
(1 - \mathbb{1}^Y(w'_g, w'_{g'}, b))\,dS_g^Y(w'_g)d\hat{S}_{g'}^Y(w'_{g'})dG(b) \\
&+ \frac{\kappa}{3} \int_{\mathcal{W}}^{w_g}\,dF_{g}(w'_g), \\
S_{g}^M(w_{g}) &= \kappa\int_{\mathcal{B}}\int_{\mathcal{W}} \int_{\mathcal{W}}^{w_{g}}
(1 - \mathbb{1}^Y(w'_{g}, w'_{g'}, b))\,dS_{g}^Y(w'_{g})d\hat{S}_{g'}^Y(w'_{g'})dG(b) \\
&+ (1 - \kappa) \int_{\mathcal{B}}\int_{\mathcal{W}} \int_{\mathcal{W}}^{w_{g}}
(1 - \mathbb{1}^M(w'_{g}, w'_{g'}, b))\,dS_{g}^M(w'_{g})d\hat{S}_{g}^M(w'_{g'})dG(b), \\
S_{g}^O(w_{g}) &= \kappa\int_{\mathcal{B}}\int_{\mathcal{W}} \int_{\mathcal{W}}^{w_{g}}
(1 - \mathbb{1}^M(w'_{g}, w'_{g'}, b))\,dS_{g}^M(w'_{g})d\hat{S}_{g'}^M(w'_{g'})dG(b) \\
&+ (1 - \kappa) \int_{\mathcal{W}}^{w_{g}}\,dS_{g}^O(w'_{g}).
\end{aligned}
}\end{equation}

In the above equations, \(\hat{S}_{g'}^{a}\) represents the normalized
distribution of singles of the opposite gender for age
\(a \in \{Y, M\}\) and is given by

\begin{equation}\protect\phantomsection\label{eq-dist-single-normalized}{
\hat{S}_{g'}^{a}(w_f) = \frac{S_{g'}^{a}(w_{g'})}{\int_{\mathcal{W}} \,dS_{g'}^{a}(w_f')}.
}\end{equation}

The first term of the \(S_g^Y(w_g)\) in equation
Eq.~\ref{eq-dist-single} counts those singles who did not marry in the
current period and did not age to \(M\). The second term represents the
arrival of new adults. The mass of new arrivals is normalized as the
total mass of singles and married couples is 1 in the steady state. For
its derivation, see Section~\ref{sec-app-mass}. \(S_g^M(w_g)\) consists
of the flow of singles who did not marry in the current period and aged
to \(M\) and singles at age \(M\) who remained single and did not age to
\(O\). Since singles at age \(O\) do not marry, \(S_g^O(w_g)\) consists
of singles at age \(M\) who did not marry in the current period and aged
to \(O\) and singles at age \(O\) who did not die in the current period.

\begin{definition}[Stationary Matching
Equilibrium]\protect\hypertarget{def-steady-state}{}\label{def-steady-state}

A \emph{stationary matching equilibrium} is a set of value functions for
singles \(W_m^a(w_m)\) and \(W_f^a(w_f)\) and couples
\(V_m^a(w_m, w_f, N_0, N_1; b)\) and \(V_f^a(w_f, w_m, N_0, N_1; b)\);
matching rules for singles \(\mathbb{1}^a(w_m, w_f, b)\); and
\emph{stationary distributions for singles} \(S_m^a(w_m)\) and
\(S_f^a(w_f)\) such that:

\begin{enumerate}
\def\labelenumi{\arabic{enumi}.}
\tightlist
\item
  The value function \(W_m^a(w_m)\) and \(W_f^a(w_f)\) solve the
  single's recursion Eq.~\ref{eq-value-single}, taking as given their
  indirect utility functions \(v_m(w_m)\) and \(v_f(w_f)\) from problem
  Eq.~\ref{eq-utility-single}, the value functions for couples
  \(V_m^a(w_m, w_f, N_0, N_1; b)\) and \(V_f^a(w_f, w_m, N_0, N_1; b)\),
  the matching rule for singles \(\mathbb{1}^a(w_m, w_f, b)\) from
  Eq.~\ref{eq-marriage}, and the wage distribution of potential mates
  \(\hat{S}_{m}^a(w_{m})\) and \(\hat{S}_{f}^a(w_{f})\) defined in
  Eq.~\ref{eq-dist-single-normalized}.
\item
  The value function \(V_{m}^a(w_m, w_f, N_0, N_1; b)\) and
  \(V_{f}^a(w_f, w_m, N_0, N_1; b)\) solve the couple's recursion
  Eq.~\ref{eq-value-married}, taking as given their indirect utility
  functions \(v_{m}(w_m, w_f, N_0, N_1)\) and
  \(v_{f}(w_f, w_m, N_0, N_1)\) from the couple's problem
  Eq.~\ref{eq-utility-married}, and the childbirth decision \(N_0^*\)
  from the couple's problem Eq.~\ref{eq-childbirth-young} or
  Eq.~\ref{eq-childbirth-middle}.
\item
  The matching rule for singles \(\mathbb{1}^a(w_m, w_f, b)\) is
  determined by the equation Eq.~\ref{eq-marriage}, taking as given the
  value functions for \(W_m^a(w_m)\), \(W_f^a(w_f)\),
  \(V_m^a(w_m, w_f, N_0, N_1; b)\), and
  \(V_f^a(w_f, w_m, N_0, N_1; b)\).
\item
  The stationary distribution for singles \(S_g^a(w_g)\) solves the
  equation Eq.~\ref{eq-dist-single}, taking as given the matching rule
  for singles \(\mathbb{1}^a(w_m, w_f, b)\).
\end{enumerate}

\end{definition}

\section{Calibration}\label{sec-calib}

The model developed will now be fit to the Japanese data for the period
2019-2023. Some parameters are exogenously determined based on a priori
information or taken directly from data. Most of the parameters,
however, will be estimated using a minimum distance procedure. In
Section~\ref{sec-past}, the model will be simulated using female wages,
social norms, and leisure technology from the period 2005-2009. It will
be assumed that the model is in a steady state for each of these years.
A comparison between two steady states will determine the key factors
that can account for changes in marriage and fertility behavior.

\subsection{Exogenous Parameters}\label{exogenous-parameters}

The length of a model period is one year. Let \(\beta\) (the subjective
discount factor) be 0.96, as the standard value in macroeconomics
studies, such as in Prescott (\citeproc{ref-prescott1986}{1986}). All
the targets for the estimation are calculated for individuals between
the ages of 25 and 54, which corresponds to an operational lifespan of
30 years. Let the transition probability \(\kappa = 1/10 = 0.1\), so
that individuals in the model also live 30 years on average. Finally,
following the Organization for Economic Co-operation and Development
(OECD) equivalence scale, set \(\chi_0 = 0.5\) and \(\chi_1 = 0.3\).

Next, some parameters are directly computed from the data. Given the
time constraint of the single's problem Eq.~\ref{eq-utility-single},
domestic labor requirements for singles, \(\psi_m^S\) and \(\psi_f^S\),
are equal to their time spent on domestic labor, \(d_m\) and \(d_f\).
Using the mean value of singles' \(d_m\) and \(d_f\) from the JHPS in
the period 2019-2023, set \(\psi_m^S =\) 0.032 and \(\psi_f^S =\) 0.056.
The mean wage of men is normalized to 1, so \(\mu_{w_m} = 0\). In
addition, I use the standard deviation of the log wage of men in the
JHPS in the period 2019-2023 as the \(\sigma_{w_m}\) since almost all
men in the sample work in the market (all the singles by the sample
definition and 95.7\% of married men in the data). The standard
deviation of the log wage of women will be endogenously determined in
the estimation since some of married women do not work in the market
(20.1\% in the data).

To sum up, the parameter values exogenously determined are summarized in
Table~\ref{tbl-param-exog}.

\begin{table}

\caption{\label{tbl-param-exog}Parameters (A Priori Information)}

\centering{

\centering
\begin{tblr}[         %% tabularray outer open
]                     %% tabularray outer close
{                     %% tabularray inner open
colspec={Q[]Q[]},
hline{2}={1-2}{solid, black, 0.05em},
hline{1}={1-2}{solid, black, 0.1em},
hline{8}={1-2}{solid, black, 0.1em},
}                     %% tabularray inner close
Parameter & Source \\
$\Gamma (N) = 1 + 0.5 + 0.3N$ & OECD equivalence scale \\
$\beta = 0.96$ & Prescott (1986) \\
$\kappa = 1 / 10$ & 30-year lifespan \\
$\mu_{w_m} = 0$ & Male wage is normalized to 1 \\
$\psi_m^{S}=0.032$, $\psi_f^{S}=0.056$ & JHPS2019-2023 \\
$\sigma_{w_m}=0.706$ & JHPS2019-2023 \\
\end{tblr}

}

\end{table}%

\subsection{Endogenous Parameters}\label{endogenous-parameters}

Now, 19 parameters will be estimated using a minimum distance procedure.
There are five preference parameters,
\(\{\gamma_c, \gamma_l, \gamma_n, \alpha_l, \alpha_n\}\); three
bargaining power parameter \(\{\rho_0, \rho_1, \rho_2\}\); two female
wage distribution parameters, \(\{\mu_{w_f}, \sigma_{w_f}\}\); two
parameters for bliss shock, \(\{\mu_{b}, \sigma_{b}\}\); two home
production parameters \(\{\theta, \xi\}\) ; three domestic labor
requirements \(\{\psi_0, \psi_1, \psi_2\}\); and two fertility
parameters \(\{\delta_1, \delta_2\}\).

The data targets are as follows:

\begin{itemize}
\tightlist
\item
  \emph{Leisure time for singles}: The mean of leisure time for singles
  aged 25-54 in the JHPS2019-2023.
\item
  \emph{Leisure and domestic labor time for couples}: The mean of
  leisure time for married couples aged 25-54 in the JHPS2019-2023. For
  each status of children, without children, with small children
  (\textless{} 7 years old), and with older children (7-18 years old),
  the mean of leisure time and domestic labor time are targeted.
\item
  \emph{Marriage rate}: The share of never-married women at age 45-54 in
  the 2020 Japanese Census. The model moment is also computed for the
  women at age \(O\).
\item
  \emph{Number of children}: The share of women at 44 years old with one
  child, two children, and three or more children in the Human Fertility
  Database (HFD) 2019-2023, i.e., the 1975-1979 cohorts. Model moments
  are computed for women at age \(O\).
\item
  \emph{Labor market outcomes}: The gender gaps in single's log wage and
  the standard deviation of the log wage for single women in the
  JHPS2019-2023.
\end{itemize}

As Table~\ref{tbl-smm-fit} illustrates, the model has no problem
matching most of the targets.

\section{Baseline Economy}\label{sec-baseline}

\subsection{Estimated Parameters}\label{estimated-parameters}

Table~\ref{tbl-endo-params} shows the estimated parameter values. The
estimate of the degree of curvature in the utility function for market
goods (\(\gamma_c =\) 1.582) is in line with the macroeconomics
literature, which typically uses a coefficient of relative risk aversion
of either 1 or 2. The other curvatures, the one for leisure time
(\(\gamma_l =\) 1.341) and for number of children (\(\gamma_n =\)
1.300), are also in the range between 1 and 2. The preference strength
of leisure (\(\alpha_l =\) 2.335) is significantly higher than the one
for market goods (since it is normalized to 1), which may reflect the
high value of leisure time these days.

The bargaining power parameter \(\rho_0 =\) -0.267 implies that the
wife's bargaining power is higher than the husband's when their wages
are equal and they do not have small children. It reduces the single
men's motivation to get married, and this is consistent with the fact
that one of the reasons for not getting married is the fear of losing
freedom and leisure time (Figure~\ref{fig-reason-single}). The other
values \(\rho_1 =\) 1.463 and \(\rho_2 =\) 0.782 suggest that the wife's
bargaining power is lower when the husband's wage is higher than the
wife's wage and when they have small children. As shown in
Section~\ref{sec-app-bargaining}, \(\rho_1 > 1\) suggests that the
intra-household leisure time gap is positively correlated with the wage
and \(\rho_2 >0\) suggests the gap is larger when they have small
children. The estimated values are consistent with the observed
relationship between the intra-household wage gap and leisure time in
Figure~\ref{fig-intra-household}.

While the variance of the log wage (\(\sigma_{w_f} =\) 0.771) takes
close values to the observed log wage dispersion of the single women
(0.766), the gender gap in the mean of the log wage
(\(\mu_{w_m}-\mu_{w_f} =\) 0.139) is larger than the gaps in singles
data (0.148). It suggests that the marriage rate gaps between people
with high and low wages, and it will be shown in
Figure~\ref{fig-marriage-earn} in Section~\ref{sec-marriage-earn}.

The bliss shock parameters, \(\mu_b =\) -1.603 and \(\sigma_b =\) 1.326,
suggest that the bliss shock is negative in the most of the matching
(88.7\%). This is a down force for the singles to marry while marriage
could improve their utility by the economy of scale (\(\chi_0 = 0.5\))
and the possibility of having children.

The CES function parameters, \(\theta =\) 0.831 and \(\xi =\) -0.029,
suggest that the domestic labor is a weak substitute (\(\xi > 0\)) and
the wife's productivity in domestic labor is higher than the husband's
(\(\theta > 0.5\)). This can be interpreted as social norms that make
wives work more at home. The domestic labor requirements, \(\psi_0 =\)
0.110, \(\psi_1 =\) 0.226, and \(\psi_2 =\) 0.052, are reasonable values
since the small children requires more domestic labor than the older
children.

Finally, the childbirth probabilities, \(\delta_1 =\) 0.254 and
\(\delta_2 =\) 0.193, suggest that the probability of having a child is
higher when the parents are young and lower when they are middle-aged.
It incentivizes the singles to marry earlier if they want to have
children.

\begin{table}

\caption{\label{tbl-endo-params}Parameters Estimated (Minimum Distance)}

\centering{

\centering
\begin{tblr}[         %% tabularray outer open
]                     %% tabularray outer close
{                     %% tabularray inner open
width={0.8\linewidth},
colspec={X[]X[]},
hline{2}={1-2}{solid, black, 0.05em},
hline{1}={1-2}{solid, black, 0.1em},
hline{10}={1-2}{solid, black, 0.1em},
}                     %% tabularray inner close
Category & Parameter Values \\
Preference & $\gamma_c = 1.582, \gamma_l = 1.341, \gamma_n = 1.300$ \\
& $\alpha_l = 2.335, \alpha_n = 3.211$ \\
Bargaining & $\rho_0 = -0.267, \rho_1 = 1.463, \rho_2 = 0.782$ \\
Female wage & $\mu_{w_f} = -0.139, \sigma_{w_f} = 0.771$ \\
Match quality & $\mu_b = -1.603, \sigma_b = 1.326$ \\
Home production & $\theta = 0.831, \xi = -0.029$ \\
Domestic labor & $\psi_0 = 0.110, \psi_1 = 0.226, \psi_2 = 0.052$ \\
Fertility & $\delta_1 = 0.254, \delta_2 = 0.193$ \\
\end{tblr}

}

\end{table}%

\subsection{Marriage Rate by Earnings}\label{sec-marriage-earn}

Next, I show how the model economy performs along dimensions that are
not targeted in the calibration. Figure~\ref{fig-marriage-earn} shows
the marriage rate by earnings deciles, i.e., the first decile contains
0-10\% of the population by earnings, and so on. The bars show the share
of married individuals with age \(O\) in each decile of earnings in the
baseline model. For the data moments, I use the Employment Status Survey
2022, which reports the number of people by earnings, age, and marital
status. Since the data reports categorical earnings, I compute the
cumulative density function of the earnings distribution and calibrate
the marriage rate by earnings deciles by linear interpolation. For the
details, see Section~\ref{sec-app-ess}.

\begin{figure}

\centering{

\includegraphics[width=0.9\linewidth,height=\textheight,keepaspectratio]{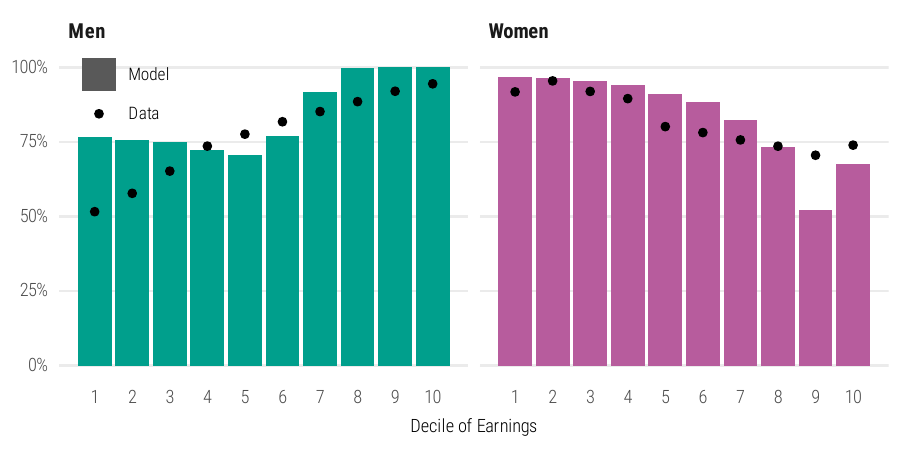}

}

\caption{\label{fig-marriage-earn}\textbf{Marriage Rate by Earnings
Decile}. The bars show the share of married individuals in each decile
of earnings in the baseline model. The points show the marriage rate
estimated from the Employment Status Survey 2022.}

\end{figure}%

The model economy captures the general pattern of the marriage rate by
earnings. In particular, the model economy reproduces the positive
correlation between the marriage rate and earnings for men and negative
correlation for women.

\subsection{Child Penalty on Leisure}\label{child-penalty-on-leisure-1}

In Section~\ref{sec-facts}, I show that the time allocation changes
around the first childbirth differ between men and women. Female leisure
time decreases significantly more than male leisure time.
Figure~\ref{fig-child-penalty-mdl} shows a similar event study using the
baseline model. I created a ten thousand single young men and women with
wages drawn from \(\mathcal{N}(\mu_{w_m}, \sigma_{w_m}^2)\) and
\(\mathcal{N}(\mu_{w_f}, \sigma_{w_f}^2)\), respectively. I simulate
their time allocations and life events, such as marriage, childbirth,
and death until 30 periods. The specification is the same as
Eq.~\ref{eq-event-study}.

Figure~\ref{fig-child-penalty-mdl} shows the results. To make a
comparison with the data, the hours are re-scaled to weekly hours, i.e.,
\(h + l + d = 16 \times 7 = 112\) hours. As in the data, the model
economy shows a decline in working hours and a larger increase in
domestic labor for women. The model economy also shows a decrease in
leisure time for both men and women, which is consistent with the data.
While it is also consistent with data that the decrease in leisure time
is larger for women, the model economy shows a smaller difference in the
decrease in leisure time between men and women (in one year after the
first childbirth, 26.1 hours in the data and 17.1 hours in the model).

\begin{figure}

\centering{

\includegraphics[width=0.9\linewidth,height=\textheight,keepaspectratio]{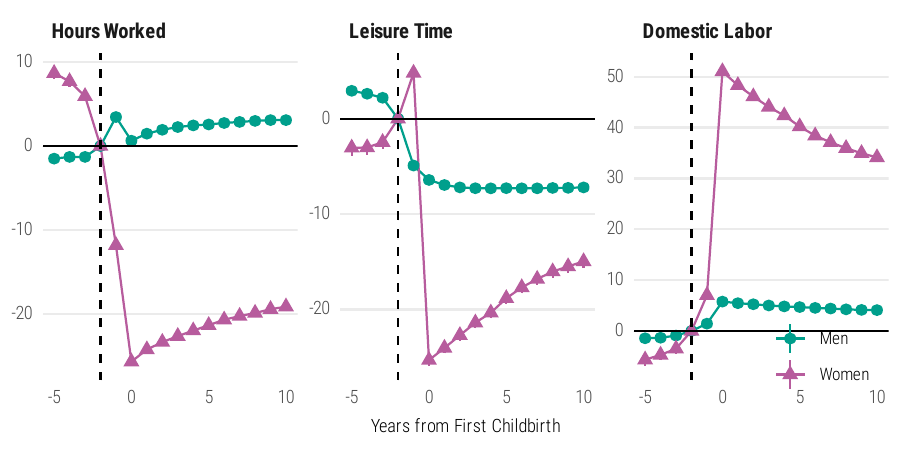}

}

\caption{\label{fig-child-penalty-mdl}\textbf{Child Penalty in the
Baseline Model}. The figure shows the event study of time allocation
around the first childbirth in the baseline model.}

\end{figure}%

\section{Back to 2005-2009}\label{sec-past}

In this section, I simulate for the 2005-2009 period. The main goal of
this analysis is to disentangle the main drivers of the recent decline
in the marriage rate and fertility rate in Japan. To answer the
question, I show the three stylized facts that significantly changed in
the past decades and may affect marriage and fertility decisions. Given
the facts, I choose three model parameters that capture the three
changes: female wage growth, change in social norms, and leisure
technology growth. With all other parameters kept at the baseline
values, I simulated the model to evaluate the impact of each factor on
the marriage and fertility decisions.

\subsection{What has changed since 2005-2009?}\label{sec-past-facts}

In this section, I will show the time trend of factors that may affect
marriage and fertility decisions. As in the previous section, I will use
the samples aged between 25-54 in the JHPS. The single sample is defined
as those who work, are not married, and have no children. The married
couple sample is defined as those who are married and have at most 3
children.

Panel (a) in Figure~\ref{fig-past-wage} shows the gender gap in the log
wage of singles. Each point represents the differences in the average
log wage and the linear fit is also plotted. From 2005, the female
relative wage has increased by around 5\%. Since this is not a small
change, I will calibrate the parameter \(\mu_{w_f}\) to capture this
trend.

Panel (b) in Figure~\ref{fig-past-wage} shows the husband's share in
domestic labor of married couples. The husbands' share in domestic labor
has almost doubled from 2005 to 2023. I interpret this as the change in
social norms on gender roles, and I will discuss its impact on marriage
and fertility decisions by calibrating the parameter \(\theta\).

\begin{figure}

\centering{

\includegraphics[width=0.9\linewidth,height=\textheight,keepaspectratio]{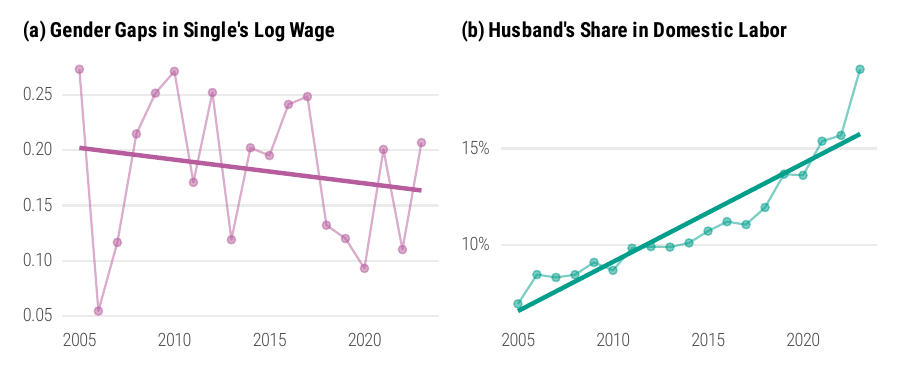}

}

\caption{\label{fig-past-wage}\textbf{Female Wage Growth and Husband's
Domestic Labor}. Panel (a) shows the gaps in the average log wages of
singles aged between 25-54. Panel (b) shows the husband's share in
domestic labor of married couples aged between 25-54. In both panels,
each dot represents the data point and the center line is the linear
fit. The data from JHPS and see the text for the sample selection.}

\end{figure}%

Finally, Figure~\ref{fig-past-hours} shows the average hours worked and
leisure time of singles. Each point represents the average hours worked
and leisure time for each year and the linear fit is also shown. The
hours worked are decreasing, and the leisure time is increasing over
time. I interpret this as the increase in leisure technology, and I will
estimate the parameter \(\alpha_l\) to study its impact on marriage and
fertility decisions.

\begin{figure}

\centering{

\includegraphics[width=0.9\linewidth,height=\textheight,keepaspectratio]{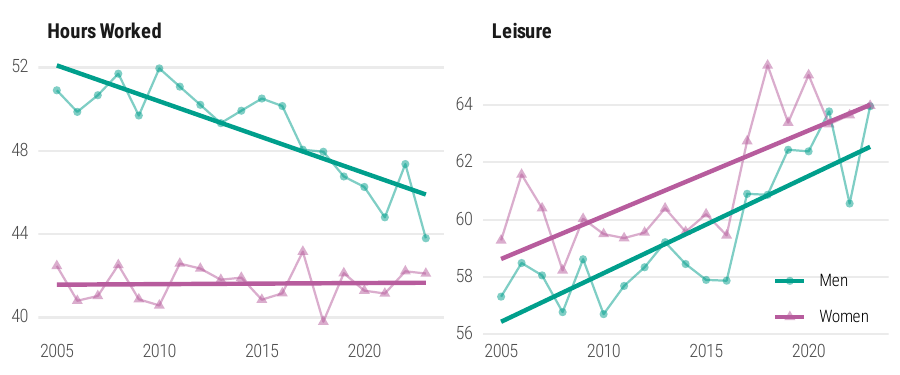}

}

\caption{\label{fig-past-hours}\textbf{Single's Hours Worked and Leisure
Time}. The figure shows the average hours worked and leisure time of
singles aged between 25-54. Each dot represents the average hours worked
and leisure time in a year and the center line is the linear fit. The
data from JHPS and see the text for the sample selection.}

\end{figure}%

\subsection{Driving Forces}\label{sec-past-params}

Section~\ref{sec-past-facts} points out the three important factors that
may affect marriage and fertility decisions: female wage growth, change
in social norms, and leisure technology growth. I interpret these
factors as the changes in the parameters \(\mu_{w_f}\), \(\theta\), and
\(\alpha_l\), respectively. The mean of log female wage, \(\mu_{w_f}\),
is directly connected to the female relative wage. The parameter
\(\theta\) is related to the social norms on domestic labor because it
captures the relative productivity of the husband for domestic labor.
The leisure technology growth is captured by the parameter \(\alpha_l\)
because larger \(\alpha_l\) means the higher utility given the same
amount of leisure.

To estimate these parameters, I use a minimum distance procedure by
fixing other parameters at the baseline values. Here I assume that all
the changes from 2005-2009 to 2019-2023 are due to these three factors,
and I choose them as the targets for the estimation.

\begin{table}

\caption{\label{tbl-past-params}Calibrated Parameters in 2005-2009 and
2019-2023}

\centering{

\centering
\begin{talltblr}[         %% tabularray outer open
entry=none,label=none,
note{}={Notes: The second column represents the estimated values of the parameters in the 2005-2009 period by minimizing the distance between data and model moments. The third column shows the targets for the estimation, and the fourth and fifth columns show the data moments and the model moments.},
]                     %% tabularray outer close
{                     %% tabularray inner open
width={0.95\linewidth},
colspec={X[0.05]X[0.15]X[0.55]X[0.1]X[0.1]},
hline{2}={1-5}{solid, black, 0.05em},
hline{1}={1-5}{solid, black, 0.1em},
hline{5}={1-5}{solid, black, 0.1em},
}                     %% tabularray inner close
& 2005-2009 & Target & Data & Model \\
$\alpha_l$ & 1.858 & Single men's leisure hours, $l_m$ & 0.516 & 0.516 \\
$\mu_f$ & -0.144 & Gender difference in single's, $\log w_m$ and $\log w_f$ & 0.181 & 0.181 \\
$\theta$ & 0.923 & Husband's share of domestic labor, $d_m / \left(d_m + d_f\right)$ & 0.917 & 0.917 \\
\end{talltblr}

}

\end{table}%

The estimated parameters are summarized in Table~\ref{tbl-past-params}.
The second column shows the estimated values of the parameters in the
2005-2009 period. The third column shows the targets for the estimation,
and the fourth and fifth columns show the data moments and the model
moments, respectively. The estimated value \(\alpha_l\) is 1.858, which
is smaller than the baseline value of 2.335. It can be interpreted as a
leisure technology growth in this period. The change from \(\mu_f =\)
-0.144 in 2005-2009 to -0.139 in 2019-2023 captures the female wage
growth. Also, the change from \(\theta =\) 0.923 in 2005-2009 to 0.831
in 2019-2023 can be interpreted as the shift in social norms.

\subsection{Why did the Marriage and Fertility
Decline?}\label{why-did-the-marriage-and-fertility-decline}

Given the calibrated parameters in Section~\ref{sec-past-params}, I
simulate the model in the period 2005-2009, and
Table~\ref{tbl-past-decomp} shows the results. The first row shows the
baseline model for the period 2019-2023, and the last row shows the
model with all the calibrated parameters in 2005-2009. To evaluate the
model, I use the marriage rate from the Japanese Census 2005 and the
cohort fertility rate (CFR) from the Human Fertility Database (HFD)
2005-2009, which are the targets in the baseline calibration.

\begin{table}

\caption{\label{tbl-past-decomp}Decomposition of the Marriage Rate and
CFR in 2005-2009}

\centering{

\centering
\begin{talltblr}[         %% tabularray outer open
entry=none,label=none,
note{}={Notes: The model is simulated with the calibrated parameters in 2005-2009. The mark $\checkmark$ means the values in 2005-2009 are used. The marriage rate is from the Japanese Census 2005 and 2020 and the cohort fertility rate (CFR) is from the Human Fertility Database (HFD) 2005-2009 and 2019-2023.},
]                     %% tabularray outer close
{                     %% tabularray inner open
width={0.85\linewidth},
colspec={X[0.3]X[0.05]X[0.05]X[0.05]X[0.1]X[0.1]X[0.1]X[0.1]},
hline{2}={8}{solid, black, 0.05em},
hline{3}={1-8}{solid, black, 0.05em},
hline{2}={5,7}{solid, black, 0.05em, l=-0.5},
hline{2}={6}{solid, black, 0.05em, r=-0.5},
hline{1}={1-8}{solid, black, 0.1em},
hline{8}={1-8}{solid, black, 0.1em},
cell{1}{1}={}{halign=c},
cell{1}{2}={}{halign=c},
cell{1}{3}={}{halign=c},
cell{1}{4}={}{halign=c},
cell{1}{5}={c=2}{halign=c, wd=0.2\linewidth},
cell{1}{6}={}{halign=c},
cell{1}{7}={c=2}{halign=c, wd=0.2\linewidth},
cell{1}{8}={}{halign=c},
}                     %% tabularray inner close
&  &  &  & Marriage Rate &  & Fertility Rate &  \\
& $\alpha_l$ & $\mu_f$ & $\theta$ & Model & Data & Model & Data \\
Baseline (2019-2023) &  &  &  & 0.839 & 0.836 & 1.631 & 1.454 \\
Leisure Technology & $\checkmark$ &  &  & 0.844 &   & 1.702 &   \\
Female Wage &  & $\checkmark$ &  & 0.839 &   & 1.630 &   \\
Social Norms &  &  & $\checkmark$ & 0.857 &   & 1.757 &   \\
All (2005-2009) & $\checkmark$ & $\checkmark$ & $\checkmark$ & 0.859 & 0.928 & 1.817 & 1.709 \\
\end{talltblr}

}

\end{table}%

Overall, the model with the calibrated parameters in 2005-2009 can
capture 21.1\% marriage rate and 73.1\% in the CFR. To study the role of
the three factors, I simulated the model with each calibrated parameter
in 2005-2009 and showed the results in the second to the fourth rows of
Table~\ref{tbl-past-decomp}. The second row shows that the leisure
technology growth has a negative impact on the marriage rate (by 4.9\%)
and CFR (by 28.0\%). The impact is especially large on CFR because the
leisure technology growth increases not only the value of being single
but also the relative value of children to leisure. The third row
suggests that female wage growth has almost no impact on the marriage
rate and CFR. This might be because the female wage growth increases
both the value of being single and married. The female values of singles
are improved by the income shock but the value of marriage is also
increased by the income shock, rise in her bargaining power, and the
affordability of children. The fourth row suggests that the smaller
value of \(\theta\), or weaker norms on gender roles, reduces the
marriage rate (by 19.2\%) and CFR (by 49.6\%). Given the higher wages of
men, the smaller \(\theta\) is a negative productivity shock for
domestic labor and reduces the value of marriage.

\section{Conclusion}\label{sec-concl}

In this paper, I build a model that can account for the marriage and
fertility decline in the last decade. The key ingredients of the model
are intra-household bargaining, leisure technology growth, and the
social norms on gender roles. The baseline model explains 21.1\% of the
marriage decline from 2005 to 2020 and 73.1\% of the fertility decline
from 2005-2009 to 2019-2023.

I also decompose the three drivers of the marriage and fertility
decline. The simulation results show that the leisure technology growth
has a significant impact on the fertility decline, which suggests that
the leisure technology growth has increased the value of being single
and also the child penalty on leisure for married couples.

\newpage{}

\appendix
\renewcommand{\thetable}{\Alph{section}.\arabic{table}}
\setcounter{table}{0}
\renewcommand{\thefigure}{\Alph{section}.\arabic{figure}}
\setcounter{figure}{0}

\begin{center}
{\huge\sffamily\bfseries Appendix}
\end{center}
\vspace{1em}

\section{Mathematical Derivations}\label{sec-app-math}

\subsection{Mass of the New Arrivals}\label{sec-app-mass}

\begin{proposition}[Mass of the New
Arrivals]\protect\hypertarget{prp-mass-arriavals}{}\label{prp-mass-arriavals}

In the equation Eq.~\ref{eq-dist-single}, the mass of new arrivals is
given by the following formula:

\[
D^N = \frac{\kappa}{3}.
\]

\end{proposition}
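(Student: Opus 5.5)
The plan is to track total population mass by age, ignoring marital status, and impose stationarity with total mass normalized to one. Every individual, single or married, moves from $Y$ to $M$, from $M$ to $O$, and from $O$ to death with the same probability $\kappa$ each period. Marriage only moves people between the single and married pools of the same age, and births of children are not new adults. So the age-$a$ population mass $P^a$ (singles plus married individuals of one gender) evolves independently of marriage decisions.

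Let $D^N$ denote the inflow of new young singles per gender per period. The laws of motion are
\[
P^{Y} = (1-\kappa)P^{Y} + D^N,\qquad P^{M} = (1-\kappa)P^{M} + \kappa P^{Y},\qquad P^{O} = (1-\kappa)P^{O} + \kappa P^{M}.
\]
Since upon death each individual is replaced by a new young single, $D^N = \kappa P^O$; this is consistent with the three equations above. Stationarity then gives $\kappa P^Y = D^N$, $\kappa P^M = \kappa P^Y$ and $\kappa P^O = \kappa P^M$. Hence $P^Y = P^M = P^O = D^N/\kappa$.

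Next apply the normalization stated after Eq.~\ref{eq-dist-single}: the total mass of singles and married (per gender) is one, so $P^Y + P^M + P^O = 3D^N/\kappa = 1$, which gives $D^N = \kappa/3$. The entrants' wages are drawn from $F_g$, so their contribution to $S_g^Y(w_g)$ is $\frac{\kappa}{3}\int_{\mathcal W}^{w_g} dF_g$, which is the second term in Eq.~\ref{eq-dist-single}.

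The main obstacle is justifying that the aggregate age masses can be summed across singles and couples without reference to the matching rule $\mathbb{1}^a$. I would handle this by writing the married masses' laws of motion explicitly. Married mass at age $Y$ gains the flows $\int \mathbb{1}^Y\, dS^Y d\hat S^Y dG$ out of the singles distribution and loses $\kappa$ to aging. Adding this to the singles' equation for $S^Y_g$ (evaluated at $w_g=\sup\mathcal W$) cancels the marriage flows exactly, and the same happens at $M$ and $O$. One point needs care: in Eq.~\ref{eq-dist-single} the marriage indicator multiplies both the staying and the aging terms. Marriage is decided at the end of the period, before aging, so the matching flow at age $Y$ is the same whether the agent then stays at $Y$ or ages to $M$. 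Checking that this timing makes the cancellation exact is the step I would verify carefully.
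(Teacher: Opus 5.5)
Your proposal is correct and matches the paper's proof. The paper writes the same three stationary laws of motion for the age masses together with the normalization $D^Y+D^M+D^O=1$, and solves them to get $D^N=\kappa/3$ and $D^Y=D^M=D^O=1/3$; your extra check that marriage flows cancel when singles and married are summed at each age is taken for granted there, and it does go through because in Eq.~\ref{eq-dist-single} the indicator $\mathbb{1}^a$ multiplies both the staying and the aging branches.
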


\begin{proof}
Define the mass of each \(a \in \{Y, M, O\}\) as \(D^{a}\). Given the
common transition probability \(\kappa\), the mass of new arrivals is
given by the following equation:

\[
\begin{aligned}
D^Y &= D^N + (1 - \kappa) D^Y \\
D^M &= \kappa D^Y + (1 - \kappa) D^M\\
D^O &= \kappa D^M + (1 - \kappa) D^O \\
1 &= D^Y + D^M + D^O.
\end{aligned}
\]

Solving the above equations, we get the mass of each group as follows:

\[
\begin{aligned}
D^N &= \frac{\kappa}{3} \\
D^Y &= \frac{1}{3} \\
D^M &= \frac{1}{3} \\
D^O &= \frac{1}{3}.
\end{aligned}
\]
\end{proof}

\subsection{Bargaining Power and Leisure Time}\label{sec-app-bargaining}

In Figure~\ref{fig-intra-household}, we have shown that the
intra-household gaps in earnings are positively correlated with leisure
time. This is consistent with the model economy, where the bargaining
power \(\rho\) is larger than 1.

\begin{proposition}[Positive Correlation in Intra-Household Gaps in
Wages and
Leisure]\protect\hypertarget{prp-bargaining}{}\label{prp-bargaining}

In the married couple's time allocation problem, the gaps in log wages
(\(\log w_m - \log w_f\)) and leisure time (\(\log l_m - \log l_f\)) are
positively correlated if and only if the bargaining power curvature
\(\rho\) is larger than 1. In addition, the gaps in leisure time are
increased by the existence of small children (\(N_0 > 0\)) if
\(\rho_2 > 0\).

\end{proposition}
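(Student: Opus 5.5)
Take the first-order conditions of the couple's problem Eq.~\ref{eq-utility-married} with respect to the two leisure choices, at an interior solution where both spouses work ($h_m, h_f > 0$). Consumption $c/\Gamma(N_0,N_1)$ and the number of children enter both spouses' utilities identically, so leisure is the only private good and the intra-household leisure gap can be isolated in closed form. Writing $\mu$ for the multiplier on the pooled budget constraint, the conditions for $l_m$ and $l_f$ (each substituting out of market hours through the time constraint) are
\[
(1-\lambda)\,\alpha_l\, l_m^{-\gamma_l} = \mu w_m, \qquad \lambda\,\alpha_l\, l_f^{-\gamma_l} = \mu w_f .
\]
Dividing and taking logs gives
\[
\gamma_l(\log l_m - \log l_f) = (\log w_m - \log w_f) - \log\frac{\lambda}{1-\lambda}.
\]
The domestic labor constraint only pins down how $d_m, d_f$ split the requirement and leaves this ratio unaffected, because $d_g$ trades off against $h_g$ at the margin, not against $l_g$ directly.

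Next substitute the logistic form Eq.~\ref{eq-bargaining-power}, which gives $\log\frac{1-\lambda}{\lambda} = \rho_0 + \rho_1(\log w_m - \log w_f) + \rho_2\mathbb{1}\{N_0>0\}$. Hence
\[
\log l_m - \log l_f = \frac{1}{\gamma_l}\Big[\rho_0 + (\rho_1 - 1)(\log w_m - \log w_f) + \rho_2 \mathbb{1}\{N_0>0\}\Big].
\]
This is affine in the log wage gap with slope $(\rho_1-1)/\gamma_l$. Since $\gamma_l>0$, the leisure gap is increasing in the wage gap if and only if $\rho_1>1$, which is the statement's ``$\rho$'' read as $\rho_1$. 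Positive correlation across couples then follows because a strictly increasing affine map of the wage gap has correlation $+1$ with it among interior couples. The second claim follows immediately: holding wages fixed, having small children shifts the gap by $\rho_2/\gamma_l$, which is positive when $\rho_2>0$.

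The main obstacle is corner solutions, chiefly wives who do not work ($h_f=0$). For them the marginal value of the wife's time is given by the domestic labor margin, $\nu\,\partial D/\partial d_f$, rather than by $w_f$, so the clean log-linear relation breaks. I would handle this in one of two ways. The first is to state the proposition for dual-earner couples, which matches the sample used in Figure~\ref{fig-intra-household}. The second is to note that at a corner the wage gap no longer affects the leisure ratio except through $\lambda$, which again moves the gap in the direction of $\rho_1$, so monotonicity (weakly) survives. I would take the first route and note the second as a remark.
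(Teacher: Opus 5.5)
Your argument is the paper's own proof: take the leisure first-order conditions, divide them, substitute the logistic \(\lambda\), and read off the slope \((\rho_1-1)/\gamma_l\) and the shift \(\rho_2/\gamma_l\). Your explicit restriction to interior, dual-earner couples is a point of rigor the paper leaves implicit. Three corrections are needed.

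\textbf{Sign slip.} Your intermediate line has the wrong sign on the wage term. Dividing the conditions gives \(\gamma_l(\log l_m-\log l_f)=\log\frac{1-\lambda}{\lambda}-(\log w_m-\log w_f)\), because a higher-paid husband's leisure is relatively dearer. This minus sign is what produces \(\rho_1-1\); your line as written would give \(\rho_1+1\) after substitution, although your final formula is correct.

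\textbf{Corner remark.} This remark undercuts the result rather than supporting it. For a non-working wife, \(w_f\) enters only through \(\lambda\). Holding \(w_m\) fixed, the leisure gap therefore rises with the wage gap whenever \(\rho_1>0\). So the ``only if'' fails at that corner, and the dual-earner restriction is necessary, not optional.

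\textbf{Correlation claim.} The ``correlation \(+1\)'' claim holds only within a given small-children status. Across couples the \(\rho_2\) term varies, so the gap is not an affine function of the wage gap alone.
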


\begin{proof}
From the first order condition of the utility function in
Eq.~\ref{eq-utility-married} with respect to \(l_m\) and \(l_f\), we
have the following equation:

\[
\begin{aligned}
(1 - \lambda) \alpha_l l_m^{-\gamma_l} &= \eta w_m \\
\lambda \alpha_l l_f^{-\gamma_l} &= \eta w_f
\end{aligned}
\]

where \(\eta\) is a Lagrange multiplier for the budget constraint. From
the above equations and
\(\lambda = \Lambda(w_m, w_f, N_0) = \frac{1}{1 + \exp\left(\rho_0 + \rho_1 (\log w_m - \log w_f) + \rho_2 \mathbb{1} \{N_0 > 0\} \right)}\),
we can derive the following equation:

\[
\log l_m - \log l_f = \frac{\rho_0}{\gamma_l} + \frac{\rho_1 - 1}{\gamma_l} (\log w_m - \log w_f) + \frac{\rho_2}{\gamma_l} \mathbb{1}\{N_0 > 0\}.
\]

Since the utility curvature parameter \(\gamma_l > 0\), the positive
correlation between the intra-household gaps in wages and leisure time
happens if and only if the bargaining power \(\rho_1\) is larger than 1.
Similarly, the existence of small children increases the gaps in leisure
time if \(\rho_2 > 0\).
\end{proof}

\subsection{\texorpdfstring{Domestic Labor Productivity
\(\theta\)}{Domestic Labor Productivity \textbackslash theta}}\label{sec-app-theta}

\begin{proposition}[Calibration of Domestic Labor
Productivity]\protect\hypertarget{prp-theta}{}\label{prp-theta}

The domestic labor productivity parameter \(\theta\) from
Eq.~\ref{eq-joint-domestic-labor} is fully characterized by the
first-order condition for the maximization problem in
Eq.~\ref{eq-utility-married}:

\[
\theta = \frac{w_f d_f^{1-\xi}}{w_m d_m^{1-\xi} + w_f d_f^{1-\xi}}.
\]

\end{proposition}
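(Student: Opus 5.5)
The plan is to work directly with the couple's static problem in Eq.~\ref{eq-utility-married} for an interior solution ($h_m, h_f > 0$, $d_m, d_f > 0$). I would substitute the time constraints, $h_g = 1 - l_g - d_g$, into the budget constraint so that consumption becomes
\[
c = w_m(1-l_m-d_m) + w_f(1-l_f-d_f).
\]
The remaining choice variables are then $(l_m, l_f, d_m, d_f)$, subject to the single domestic-labor constraint $D(d_m, d_f) = \bar\psi$, where $\bar\psi = \psi_0 + \psi_1 \mathbb{1}\{N_0>0\} + \psi_2\mathbb{1}\{N_0+N_1>0\}$. I would attach a multiplier $\mu$ to this constraint. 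The key observation is that $d_m$ and $d_f$ enter the objective only through $c$, and hence only through the common marginal utility of consumption
\[
\eta = \bigl((1-\lambda)+\lambda\bigr)\,\Gamma^{-1}\,(c/\Gamma)^{-\gamma_c}.
\]
So the bargaining weight $\lambda$ drops out of the domestic-labor margin.

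The first-order conditions in $d_m$ and $d_f$ are then
\[
\eta w_m = \mu\,\partial D/\partial d_m, \qquad \eta w_f = \mu\,\partial D/\partial d_f.
\]
Differentiating the CES aggregator in Eq.~\ref{eq-joint-domestic-labor} gives
\[
\partial D/\partial d_m = D^{1-\xi}(1-\theta)d_m^{\xi-1}, \qquad \partial D/\partial d_f = D^{1-\xi}\theta d_f^{\xi-1}.
\]
Dividing the two conditions eliminates $\eta$, $\mu$ and $D$, leaving the tangency condition
\[
\frac{w_m}{w_f} = \frac{(1-\theta) d_m^{\xi-1}}{\theta d_f^{\xi-1}},
\]
which rearranges to $\theta\, w_m d_m^{1-\xi} = (1-\theta)\, w_f d_f^{1-\xi}$. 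Solving this linear equation for $\theta$ gives the stated formula. Since $\theta$ is then a strictly monotone function of observables $(w_m, w_f, d_m, d_f)$ and the known $\xi$, it is fully characterized, or identified, by the first-order condition.

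I expect the main obstacle to be justifying interiority, because the formula relies on both spouses supplying positive market hours. If a spouse is at a corner, for example $h_f = 0$ (non-working wives are about 20\% of the data), the $d_f$ condition holds with $\eta w_f$ replaced by the shadow value of her time. That shadow value equals her marginal utility of leisure, $\lambda\alpha_l l_f^{-\gamma_l}$, rather than $\eta w_f$. I would state the proposition for interior solutions, where both $h_g>0$. I would also note that the Inada behavior of the CES aggregator when $\xi<1$ guarantees $d_m, d_f>0$, so the only corners to rule out are in market hours.
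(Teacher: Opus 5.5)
Your argument is correct and is essentially the paper's. The paper poses the cost-minimization problem $\min_{d_m,d_f} w_m d_m + w_f d_f$ subject to $D(d_m,d_f)=\bar\psi$ directly, divides the two first-order conditions to get $w_m/w_f = \frac{1-\theta}{\theta}\, d_m^{\xi-1}/d_f^{\xi-1}$, and solves for $\theta$, just as you do. Your derivation from the full couple problem (showing that $d_m,d_f$ enter only through $c$, so $\lambda$ drops out) justifies why that subproblem is the right one, and your restriction to $h_m,h_f>0$ makes explicit an interiority assumption that the paper's proof uses without stating it.
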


\begin{proof}
The cost minimization problem of joint domestic labor is given by

\[
\min_{d_m, d_f} w_m d_m + w_f d_f
\]

subject to

\[
D(d_m, d_f) = \left((1-\theta) d_m^{\xi} + \theta d_f^{\xi}\right)^{\frac{1}{\xi}} = \psi_0 + \psi_1 \mathbb{1}\{N_0 > 0\} + \psi_2 \mathbb{1}\{N_0 + N_1> 0\}.
\]

The first order conditions with respect to \(d_m\) and \(d_f\) are

\[
\begin{aligned}
w_m &= \eta \left((1-\theta) d_m^{\xi} + \theta d_f^{\xi}\right)^{\frac{1}{\xi} - 1}(1-\theta)d_m^{\xi-1}  \\
w_f &= \eta \left((1-\theta) d_m^{\xi} + \theta d_f^{\xi}\right)^{\frac{1}{\xi} - 1}\theta d_f^{\xi-1}.
\end{aligned}
\]

where \(\eta\) is a Lagrange multiplier for the cost minimization
problem. From the above equations, we can derive the following equation

\[
\frac{w_m}{w_f} = \frac{1-\theta}{\theta} \frac{d_m^{\xi-1}}{d_f^{\xi-1}},
\]

and we will get the equation of the proposition.
\end{proof}

\section{Data Description}\label{sec-app-data}

\subsection{Japanese Household Panel Survey (JHPS)}\label{sec-app-jhps}

The analysis is mostly based on the Japanese Household Panel Survey
(JHPS). The JHPS has been implemented annually since 2004 by the Panel
Data Research Center at Keio University and was originally named the
Keio Household Panel Survey (KHPS). The purpose of the KHPS is to
collect panel data on households and individuals reflecting the
population composition of society as a whole, as in the Panel Study of
Income Dynamics (PSID) in the U.S. and the European Community Household
Panel (ECHP) in Europe. KHPS started in 2004 with 4000 households and
7,000 individuals nationwide and added a cohort of about 1400 households
and 2500 individuals from 2007 to compensate for sample dropout. In
2009, the Panel Data Research Center at Keio University started a new
survey, the JHPS, targeting 4000 male and female subjects nationwide in
parallel with the KHPS. The JHPS collects data focused on education and
health/healthcare in addition to economic status and employment status.
In 2014, the KHPS was merged with the JHPS.

As described in Section~\ref{sec-data-samples}, the sample is restricted
to people aged 25-54 in the period 2005-2023. The sample of singles is
restricted to those who have a job, a positive leisure time, and have no
children. The sample of married couples is also restricted to those who
have a positive leisure time, however, it includes the case of
non-working individuals. Table~\ref{tbl-sum-single} and
Table~\ref{tbl-sum-couple} show the summary statistics of singles and
married couples, respectively.

\begin{table}

\caption{\label{tbl-sum-single}Summary Statistics of Singles in
JHPS2005-2023}

\centering{

\centering
\begin{talltblr}[         %% tabularray outer open
entry=none,label=none,
note{}={Notes: The table shows the summary statistics of singles aged 25-54 in the period 2005-2023. The sample of singles is restricted to those who have a job, a positive leisure time, and have no children.},
]                     %% tabularray outer close
{                     %% tabularray inner open
width={0.9\linewidth},
colspec={X[0.3]X[0.15]X[0.15]X[0.15]X[0.15]},
hline{2}={5}{solid, black, 0.05em},
hline{3}={1-5}{solid, black, 0.05em},
hline{2}={2,4}{solid, black, 0.05em, l=-0.5},
hline{2}={3}{solid, black, 0.05em, r=-0.5},
hline{1}={1-5}{solid, black, 0.1em},
hline{7}={1-5}{solid, black, 0.1em},
cell{1}{1}={}{halign=c},
cell{1}{2}={c=2}{halign=c, wd=0.3\linewidth},
cell{1}{3}={}{halign=c},
cell{1}{4}={c=2}{halign=c, wd=0.3\linewidth},
cell{1}{5}={}{halign=c},
}                     %% tabularray inner close
& Men (N = 1199) &  & Women (N = 1119) &  \\
& Mean & Std. Dev. & Mean & Std. Dev. \\
Hours worked (per week) & 48.7 & 17.1 & 41.7 & 15.8 \\
Leisure (per week) & 59.7 & 17.6 & 60.8 & 16.9 \\
Domestic labor (per week) & 3.5 & 5.4 & 9.5 & 10.0 \\
Hourly wage (JPY) & 1702.1 & 1640.4 & 1400.8 & 2034.2 \\
\end{talltblr}

}

\end{table}%

\begin{table}

\caption{\label{tbl-sum-couple}Summary Statistics of Married Couples in
JHPS2005-2023}

\centering{

\centering
\begin{talltblr}[         %% tabularray outer open
entry=none,label=none,
note{}={Notes: The table shows the summary statistics of married couples aged 25-54 in the period 2005-2023. The sample of married couples is restricted to those who have a positive leisure time and have up to three children.},
]                     %% tabularray outer close
{                     %% tabularray inner open
width={0.9\linewidth},
colspec={X[0.3]X[0.15]X[0.15]X[0.15]X[0.15]},
hline{2}={5}{solid, black, 0.05em},
hline{3}={1-5}{solid, black, 0.05em},
hline{2}={2,4}{solid, black, 0.05em, l=-0.5},
hline{2}={3}{solid, black, 0.05em, r=-0.5},
hline{1}={1-5}{solid, black, 0.1em},
hline{7}={1-5}{solid, black, 0.1em},
cell{1}{1}={}{halign=c},
cell{1}{2}={c=2}{halign=c, wd=0.3\linewidth},
cell{1}{3}={}{halign=c},
cell{1}{4}={c=2}{halign=c, wd=0.3\linewidth},
cell{1}{5}={}{halign=c},
}                     %% tabularray inner close
& Men (N = 1981) &  & Women (N = 2034) &  \\
& Mean & Std. Dev. & Mean & Std. Dev. \\
Hours worked (per week) & 49.1 & 20.6 & 22.3 & 19.0 \\
Leisure (per week) & 58.4 & 21.3 & 52.9 & 22.7 \\
Domestic labor (per week) & 4.5 & 7.3 & 36.8 & 23.5 \\
Hourly wage (JPY) & 2382.6 & 2670.9 & 1263.3 & 1446.7 \\
\end{talltblr}

}

\end{table}%

\subsection{Marriage Rate by Earnings Decile}\label{sec-app-ess}

In Figure~\ref{fig-marriage-earn}, I show the marriage rate by earnings
deciles from the Employment Status Survey 2022. In this survey, the
number of people by earnings, age, and marital status is reported.
Figure~\ref{fig-earn-ess} shows the original data of the number of
people aged between 45 and 54 by earnings and marital status.

\begin{figure}

\centering{

\includegraphics[width=0.9\linewidth,height=\textheight,keepaspectratio]{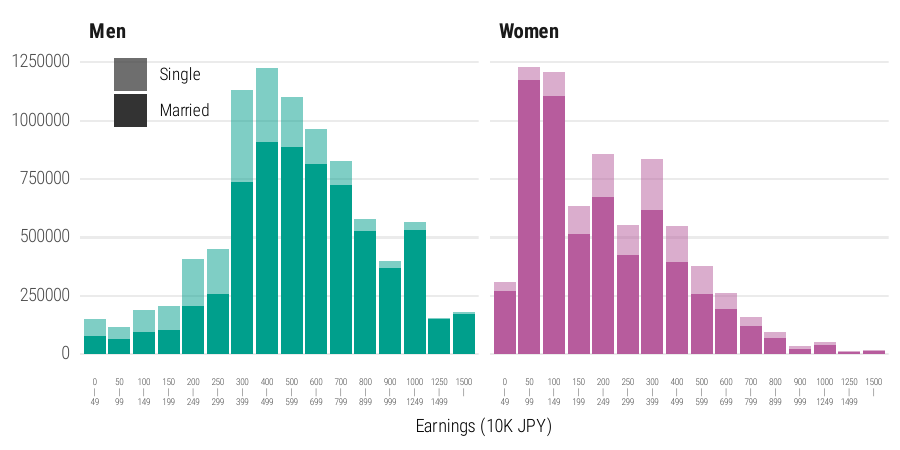}

}

\caption{\label{fig-earn-ess}\textbf{Number of People Aged between 45
and 54 by Earnings and Marital Status}. The data is from the Employment
Status Survey 2022.}

\end{figure}%

To estimate the marriage rate by earnings deciles, I compute the
cumulative density of the earnings distribution by marital status. Since
the original data reports only the 16 categories of earnings, I estimate
the cumulative density by linear interpolation. Figure~\ref{fig-cdf-ess}
shows the estimated cumulative density of earnings by marital status.

\begin{figure}

\centering{

\includegraphics[width=0.9\linewidth,height=\textheight,keepaspectratio]{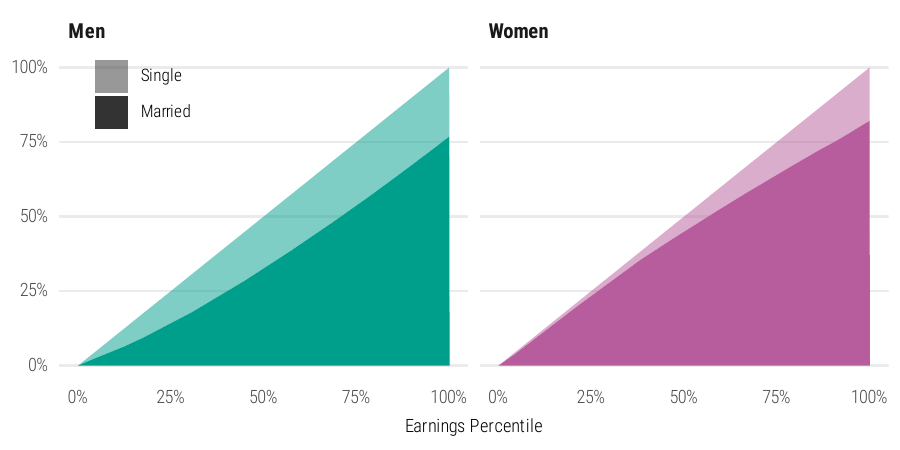}

}

\caption{\label{fig-cdf-ess}\textbf{Cumulative Density of Earnings by
Marital Status}. The data is from the Employment Status Survey 2022.}

\end{figure}%

The marriage rate by earnings deciles is defined as the ratio of the
mass of married individuals to the cumulative density of single and
married individuals. Since the denominator is 10\% by definition, the
marriage rate by earnings decile is given by

\[
\text{Marriage Rate}_n = \frac{\hat{M}(0.1n) - \hat{M}(0.1(n-1))}{0.1},
\]

where \(\hat{M}(q)\) is the estimated cumulative density of married
individuals at the \(q\)-th percentile of earnings.

\section{Supplemental Figures and Tables}\label{sec-app-figure}

\begin{figure}

\centering{

\includegraphics[width=0.9\linewidth,height=\textheight,keepaspectratio]{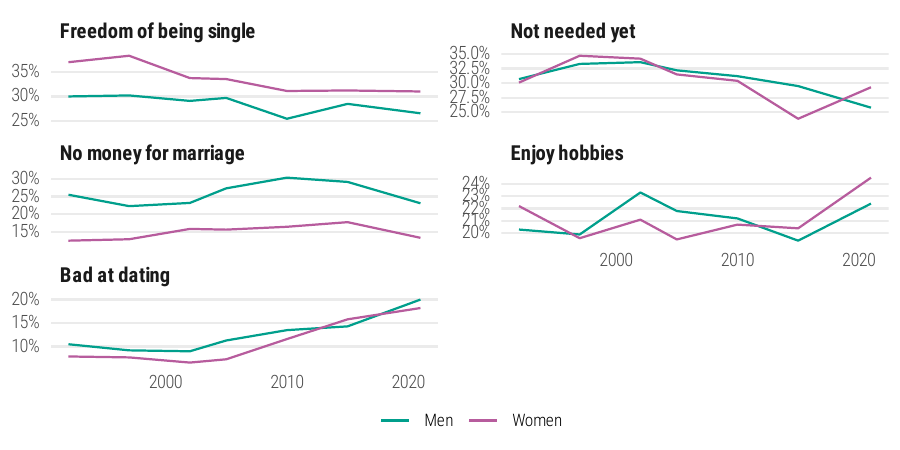}

}

\caption{\label{fig-reason-single-ts}\textbf{Time Series of Reasons Why
Do not Get Married}. The figure shows the time trends of the share of
the main reasons why do not get married. The data is from the National
Fertility Survey from 1992 to 2021 and the sample is restricted to the
age group 25-34.}

\end{figure}%

\begin{figure}

\centering{

\includegraphics[width=0.9\linewidth,height=\textheight,keepaspectratio]{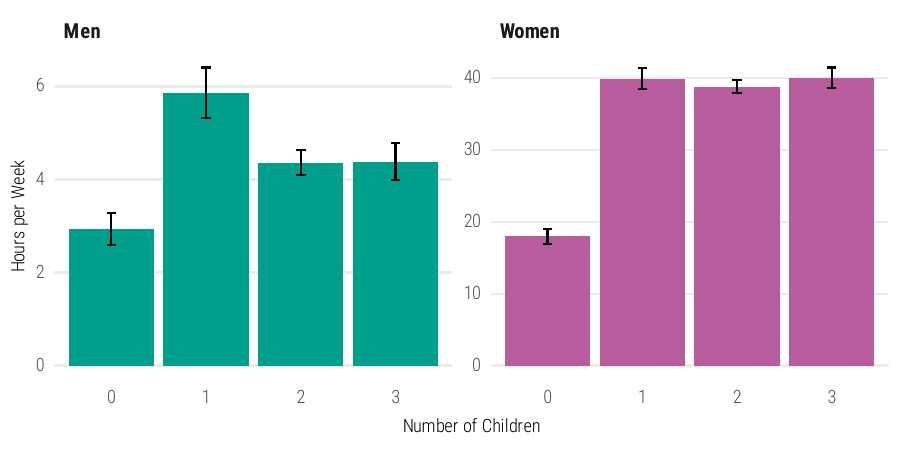}

}

\caption{\label{fig-hours-domestic}\textbf{Domestic Labor Hours by
Number of Children}. The figure shows the average hours of domestic
labor by the number of children. The error bar shows the 95\% confidence
interval with the standard error clustered by household. The data is
from the Japanese Household Panel Survey from 2005 to 2023 and the
sample is restricted to married individuals with age 25-54.}

\end{figure}%

\begin{table}

\caption{\label{tbl-smm-fit}Model and Data Moments}

\centering{

\centering
\begin{tblr}[         %% tabularray outer open
]                     %% tabularray outer close
{                     %% tabularray inner open
colspec={Q[]Q[]Q[]Q[]},
hline{2}={1-4}{solid, black, 0.05em},
hline{1}={1-4}{solid, black, 0.1em},
hline{22}={1-4}{solid, black, 0.1em},
}                     %% tabularray inner close
& Data & Model & Source \\
Single $l_m$ & 0.560 & 0.551 & JHPS2019-2023 \\
Single $l_f$ & 0.571 & 0.502 & JHPS2019-2023 \\
Married $l_m$, without children & 0.536 & 0.532 & JHPS2019-2023 \\
Married $l_f$, without children & 0.573 & 0.575 & JHPS2019-2023 \\
Married $l_m$, with small children (< 7 years old) & 0.445 & 0.504 & JHPS2019-2023 \\
Married $l_f$, with small children (< 7 years old) & 0.304 & 0.301 & JHPS2019-2023 \\
Married $l_m$, with older children (>= 7 years old) & 0.539 & 0.490 & JHPS2019-2023 \\
Married $l_f$, with older children (>= 7 years old) & 0.496 & 0.529 & JHPS2019-2023 \\
Married $d_m$, without children & 0.035 & 0.035 & JHPS2019-2023 \\
Married $d_f$, without children & 0.150 & 0.154 & JHPS2019-2023 \\
Married $d_m$, with small children (< 7 years old) & 0.124 & 0.104 & JHPS2019-2023 \\
Married $d_f$, with small children (< 7 years old) & 0.489 & 0.550 & JHPS2019-2023 \\
Married $d_m$, with older children (>= 7 years old) & 0.040 & 0.045 & JHPS2019-2023 \\
Married $d_f$, with older children (>= 7 years old) & 0.287 & 0.227 & JHPS2019-2023 \\
Share of women with one child & 0.197 & 0.184 & HFD2019-2023 \\
Share of women with two children & 0.361 & 0.349 & HFD2019-2023 \\
Share of women with three or more children & 0.165 & 0.162 & HFD2019-2023 \\
Mean difference in single's $\log w_m$ and $\log w_f$ & 0.148 & 0.147 & JHPS2019-2023 \\
S.D. of single's $\log w_f$ & 0.766 & 0.804 & JHPS2019-2023 \\
Share of never-married women & 0.164 & 0.161 & Census2020 \\
\end{tblr}

}

\end{table}%

\newpage{}

\section*{References}\label{references}
\addcontentsline{toc}{section}{References}

\protect\phantomsection\label{refs}
\begin{CSLReferences}{1}{1}
\bibitem[\citeproctext]{ref-adda2017}
Adda, Jérôme, Christian Dustmann, and Katrien Stevens. 2017. {``The
{Career Costs} of {Children}.''} \emph{Journal of Political Economy} 125
(2): 293--337. \url{https://doi.org/10.1086/690952}.

\bibitem[\citeproctext]{ref-aguiar2021}
Aguiar, Mark, Mark Bils, Kerwin Kofi Charles, and Erik Hurst. 2021.
{``Leisure {Luxuries} and the {Labor Supply} of {Young Men}.''}
\emph{Journal of Political Economy} 129 (2): 337--82.
\url{https://doi.org/10.1086/711916}.

\bibitem[\citeproctext]{ref-ahn2002}
Ahn, Namkee, and Pedro Mira. 2002. {``A Note on the Changing
Relationship Between Fertility and Female Employment Rates in Developed
Countries.''} \emph{Journal of Population Economics} 15 (4): 667--82.
\url{https://doi.org/10.1007/s001480100078}.

\bibitem[\citeproctext]{ref-arkhangelsky2026}
Arkhangelsky, Dmitry, Kazuharu Yanagimoto, and Tom Zohar. 2026. \emph{On
{Causal Inference} with {Model-Based Outcomes}}. arXiv:2403.19563.
arXiv. \url{https://doi.org/10.48550/arXiv.2403.19563}.

\bibitem[\citeproctext]{ref-basu2006}
Basu, Kaushik. 2006. {``Gender and {Say}: {A Model} of {Household
Behaviour} with {Endogenously Determined Balance} of {Power}.''}
\emph{The Economic Journal} 116 (511): 558--80.
\url{https://doi.org/10.1111/j.1468-0297.2006.01092.x}.

\bibitem[\citeproctext]{ref-baudin2015}
Baudin, Thomas, David De La Croix, and Paula E. Gobbi. 2015.
{``Fertility and {Childlessness} in the {United States}.''}
\emph{American Economic Review} 105 (6): 1852--82.
\url{https://doi.org/10.1257/aer.20120926}.

\bibitem[\citeproctext]{ref-blasutto2023}
Blasutto, Fabio. 2023. {``Cohabitation Vs. {Marriage}: {Mating
Strategies} by {Education} in {The USA}.''} \emph{Journal of the
European Economic Association}, November, jvad065.
\url{https://doi.org/10.1093/jeea/jvad065}.

\bibitem[\citeproctext]{ref-burda2013}
Burda, Michael, Daniel S. Hamermesh, and Philippe Weil. 2013. {``Total
Work and Gender: Facts and Possible Explanations.''} \emph{Journal of
Population Economics} 26 (1): 239--61.
\url{https://doi.org/10.1007/s00148-012-0408-x}.

\bibitem[\citeproctext]{ref-cruces2024}
Cruces, Lidia. 2024. {``A Quantitative Theory of the New Life Cycle of
Women's Employment.''} \emph{Journal of Economic Dynamics and Control}
169 (December): 104960.
\url{https://doi.org/10.1016/j.jedc.2024.104960}.

\bibitem[\citeproctext]{ref-doepke2023}
Doepke, Matthias, Anne Hannusch, Fabian Kindermann, and Michèle Tertilt.
2023. {``The Economics of Fertility: A New Era.''} In \emph{Handbook of
the {Economics} of the {Family}}, vol. 1. Elsevier.
\url{https://doi.org/10.1016/bs.hefam.2023.01.003}.

\bibitem[\citeproctext]{ref-doepke2019}
Doepke, Matthias, and Fabian Kindermann. 2019. {``Bargaining over
{Babies}: {Theory}, {Evidence}, and {Policy Implications}.''}
\emph{American Economic Review} 109 (9): 3264--306.
\url{https://doi.org/10.1257/aer.20160328}.

\bibitem[\citeproctext]{ref-gonzalez-chapela2007}
González-Chapela, Jorge. 2007. {``On the {Price} of {Recreation Goods}
as a {Determinant} of {Male Labor Supply}.''} \emph{Journal of Labor
Economics} 25 (4): 795--824. \url{https://doi.org/10.1086/519538}.

\bibitem[\citeproctext]{ref-greenwood2016}
Greenwood, Jeremy, Nezih Guner, Georgi Kocharkov, and Cezar Santos.
2016. {``Technology and the {Changing Family}: {A Unified Model} of
{Marriage}, {Divorce}, {Educational Attainment}, and {Married Female
Labor-Force Participation}.''} \emph{American Economic Journal:
Macroeconomics} 8 (1): 1--41.
\url{https://doi.org/10.1257/mac.20130156}.

\bibitem[\citeproctext]{ref-greenwood2023}
Greenwood, Jeremy, Nezih Guner, and Ricardo Marto. 2023. {``The Great
Transition: {Kuznets} Facts for Family-Economists.''} In \emph{Handbook
of the {Economics} of the {Family}}, edited by Shelly Lundberg and
Alessandra Voena, vol. 1. Handbook of the {Economics} of the {Family},
{Volume} 1. North-Holland.
\url{https://doi.org/10.1016/bs.hefam.2023.01.006}.

\bibitem[\citeproctext]{ref-guner2023}
Guner, Nezih, Ezgi Kaya, and Virginia Sánchez Marcos. 2023. \emph{Labor
{Market Institutions} and {Fertility}}.

\bibitem[\citeproctext]{ref-guo2024}
Guo, Naijia, and Anning Xie. 2024. \emph{Childbirth and {Welfare
Inequality}: {The Role} of {Bargaining Power} and {Intrahousehold
Allocation}}.

\bibitem[\citeproctext]{ref-iyigun2007}
Iyigun, Murat, and Randall P. Walsh. 2007. {``Endogenous Gender Power,
Household Labor Supply and the Demographic Transition.''} \emph{Journal
of Development Economics} 82 (1): 138--55.
\url{https://doi.org/10.1016/j.jdeveco.2005.09.004}.

\bibitem[\citeproctext]{ref-kitao2024b}
Kitao, Sagiri, and Kanato Nakakuni. 2024. \emph{On the {Trends} of
{Technology}, {Family Formation}, and {Women}'s {Time Allocation}}.
\url{https://doi.org/10.2139/ssrn.4833347}.

\bibitem[\citeproctext]{ref-kleven2019a}
Kleven, Henrik, Camille Landais, and Jakob Egholt Søgaard. 2019.
{``Children and {Gender Inequality}: {Evidence} from {Denmark}.''}
\emph{American Economic Journal: Applied Economics} 11 (4): 181--209.
\url{https://doi.org/10.1257/app.20180010}.

\bibitem[\citeproctext]{ref-knowles2013}
Knowles, John A. 2013. {``Why Are {Married Men Working So Much}? {An
Aggregate Analysis} of {Intra-Household Bargaining} and {Labour
Supply}.''} \emph{The Review of Economic Studies} 80 (3): 1055--85.
\url{https://doi.org/10.1093/restud/rds043}.

\bibitem[\citeproctext]{ref-kopecky2011}
Kopecky, Karen A. 2011. {``{THE TREND IN RETIREMENT}*.''}
\emph{International Economic Review} 52 (2): 287--316.
\url{https://doi.org/10.1111/j.1468-2354.2011.00629.x}.

\bibitem[\citeproctext]{ref-kopytov2023}
Kopytov, Alexandr, Nikolai Roussanov, and Mathieu Taschereau-Dumouchel.
2023. {``Cheap {Thrills}: {The Price} of {Leisure} and the {Global
Decline} in {Work Hours}.''} \emph{Journal of Political Economy
Macroeconomics} 1 (1): 80--118. \url{https://doi.org/10.1086/723717}.

\bibitem[\citeproctext]{ref-lise2019}
Lise, Jeremy, and Ken Yamada. 2019. {``Household {Sharing} and
{Commitment}: {Evidence} from {Panel Data} on {Individual Expenditures}
and {Time Use}.''} \emph{The Review of Economic Studies} 86 (5):
2184--219. \url{https://doi.org/10.1093/restud/rdy066}.

\bibitem[\citeproctext]{ref-myong2021}
Myong, Sunha, JungJae Park, and Junjian Yi. 2021. {``Social {Norms} and
{Fertility}.''} \emph{Journal of the European Economic Association} 19
(5): 2429--66. \url{https://doi.org/10.1093/jeea/jvaa048}.

\bibitem[\citeproctext]{ref-prescott1986}
Prescott, Edward C. 1986. {``Theory {Ahead} of {Business Cycle
Measurement}.''} \emph{Quarterly Review} 10 (4).
\url{https://doi.org/10.21034/qr.1042}.

\bibitem[\citeproctext]{ref-santos2016}
Santos, Cezar, and David Weiss. 2016. {``"{Why Not Settle} down
{Already}?" a {Quantitative Analysis} of the {Delay} in {Marriage}.''}
\emph{International Economic Review} 57 (2): 425--52.
\url{https://www.jstor.org/stable/44075354}.

\bibitem[\citeproctext]{ref-vandenbroucke2009}
Vandenbroucke, Guillaume. 2009. {``Trends in Hours: {The U}.{S}. From
1900 to 1950.''} \emph{Journal of Economic Dynamics and Control} 33 (1):
237--49. \url{https://doi.org/10.1016/j.jedc.2008.06.004}.

\end{CSLReferences}

\end{document}